\let\old@makecaption\@makecaption
\renewcommand{\@makecaption}[2]{\old@makecaption{#1}{\ignorespaces #2}}
\newtheorem{theorem}{Theorem}
\newtheorem{proposition}[theorem]{Proposition}
\newtheorem{example}[theorem]{Example}
\newtheorem{definition}[theorem]{Definition}
\begin{document}
	
	\preprint{APS/123-QED}
	
	\title{ Several kinds of  Gaussian quantum  channels  related to Einstein-Podolsky-Rosen steering}

	\author{Ruifen Ma}\thanks{Corresponding author}
		\email{ruifenma@tyust.edu.cn}
	\affiliation{Department of Mathematics, Taiyuan University of
		Science and Technology, Taiyuan 030024, P. R. China}

	\author{Yanjing Sun}
	\affiliation{Department of Mathematics, Taiyuan University of
		Science and Technology, Taiyuan 030024, P. R. China}

	\author{Xiaofei Qi}\thanks{Corresponding author}
\email{xiaofeiqisxu@aliyun.com}
\affiliation{School of Mathematics and Statistics, Shanxi University,
	Taiyuan 030006, P. R. China}\affiliation{Key Laboratory of Complex Systems and Data Science of Ministry of Education,
	Shanxi University, Taiyuan  030006,  Shanxi, China}

\begin{abstract}
EPR steering is a crucial quantum resource that lies intermediate between entanglement and Bell nonlocality. Gaussian channels, meanwhile, play a foundational role in diverse quantum protocols, secure communication, and related fields. In this paper, we focus on several classes of Gaussian channels associated with EPR steering: Gaussian steering-annihilating channels, Gaussian steering-breaking channels, Gaussian unsteerable channels, and maximal Gaussian unsteerable channels. We give the concepts of these channels, derive the necessary and sufficient conditions for a Gaussian channel to belong to each class, and explore the intrinsic relationships among them. Additionally, since quantifying the steering capability of Gaussian channels in continuous-variable systems requires an understanding of the structure of free superchannels, we also provide a detailed characterization of Gaussian unsteerable superchannels and maximal Gaussian unsteerable superchannels.
\end{abstract}

\maketitle
\section{Introduction}

EPR steering is a fundamental and  important resource for quantum information science.
In 1935, Einstein, Podolsky and Rosen (EPR) first discovered the
anomalous phenomenon of quantum states in  multipartite quantum systems, which
is contrary to the classical mechanics \cite{EPR}. In order to
capture the essence of the EPR paradox, the notion of EPR steering
was  introduced by Schr$\ddot{o}$dinger  in \cite{S}, which is  a quantum correlation between entanglement and Bell
nonlocality. It has been shown that EPR steering plays a fundamental role in
various quantum protocols, secure communication and other fields
\cite{CGP,D1,hq}.

Gaussian states are a special class of quantum states in continuous-variable (CV) quantum systems, playing a pivotal role in quantum optics and quantum information theory \cite{SLP,WPG1,S1}. Over the past few years, the EPR steering criteria and measures for Gaussian states have garnered considerable attention from researchers (see \cite{JiL,KSA,KLRA,ZYL,TGRF,MKIB,BJC,YHQ} and the references therein).
Notably, quantum information processing inevitably involves quantum channels. As a distinctive class of quantum channels, Gaussian quantum channels not only furnish a core theoretical framework for elucidating the inherent physical limitations of quantum communication and quantum computing, but also directly underpin the translation of quantum technologies from theoretical concepts to practical applications \cite{BSS,MDG,CEGH,CD,NGA,LD,PMGH}. They thus hold irreplaceable significance in quantum systems, especially in optical quantum systems.

 In quantum information processing, the storage and transmission of quantum states are crucial but inevitably affected by the environment. Thus, researching which types of environmental influences should be avoided and which are acceptable is of critical importance. In other words, analyzing the structure of quantum channels is essential. One promising research direction focuses on the dynamics of quantum resources, such as quantum entanglement, EPR steering and  non-locality, under local quantum channels.
Horodecki, Shor and Ruskai \cite{HSR} studied
entanglement-breaking channels in bipartite discrete-variable quantum systems which completely disentangle the subsystem they act on from the rest of the system, and  proved that the Holevo capacity of entanglement-breaking channels is additive. This work provides theoretical support for optimizing classical information transmission over noisy channels in quantum communication.
Holevo \cite{HEB} discussed the condition any mode Gaussian quantum channels becoming entanglement-breaking, and  using this condition to study several additivity conjectures of capacity for one mode Gaussian channels.
The results in  \cite{KGEB} indicate that  Gaussian entanglement-breaking channels and entanglement-annihilating channels serve as essential benchmarks for the infeasibility assessment of entanglement preservation, and act as core tools for both theoretical analysis and practical guidance in this field. Based on Gaussian entanglement-breaking channels, passive eavesdropping-immune CV quantum key distribution  protocols can be constructed.
Pereg \cite{P11} investigated the communication performance of entanglement-breaking channels with unreliable entanglement assistance, revealing the mechanism by which initial entanglement enhances the signal-to-noise ratio through classical correlations.  Hence the study on (Gaussian) entanglement-breaking channels is of great practical significance.

Note that  EPR steering is another important quantum  resource. Inspired by the above works,
the primary objective of this paper is to investigate some types of special Gaussian quantum channels with respect to EPR steering: Gaussian steering-breaking channels that locally disrupt steering; Gaussian steering-annihilating channels that completely eliminate steering;  and maximal Gaussian unsteerable channels that map Gaussian unsteerable states into Gaussian unsteerable states, so as to exploit the ability of Gaussian channels to create or destroy steering and  lay a solid theoretical foundation for the resource theory of steering for Gaussian channels.\\

Regarding the correlation measures and resource theories of quantum channels, substantial research efforts have been devoted. Bäuml et al. \cite{BDW} proposed several entanglement measures tailored for bipartite quantum channels. Mani \cite{MK} introduced the concepts of cohering and decohering power of quantum channels, along with corresponding quantification methods. Xu \cite{X19} established a coherence resource theory for channels in finite-dimensional systems, while the authors of \cite{YX1} developed a general operational resource theory framework for quantum channels in such systems. For Gaussian channel resource theories, Xu \cite{X11} constructed a coherence resource theory specific to Gaussian channels and proposed a coherence measure for them, based on the relative entropy coherence measure for Gaussian states.
Recall that a quantum channel
 resource theory is defined as a tuple $(\mathcal F,\mathcal O,\mathcal R)$, where $\mathcal F$ is the set of free
 channels  that do not have any
 resource, $\mathcal O$ is the set of  free superoperations  which transform free channels into free channels; and  $\mathcal R$  is the set of  channel resource measures which map quantum channels into  nonnegative real numbers satisfying the following two fundamental conditions:

$( f_1)$ non-negativity: $\mathcal R\left( \phi  \right) \ge 0$ for all $\phi  \in \mathcal C(H)$ (the set of all quantum channels on a separable complex Hilbert space $H$),  and
 $\mathcal R\left( \phi  \right)=0$ for any $\phi \in {\mathcal F}$;

$ (f_2)$ monotonicity: $\mathcal R\left( {\Psi\left( \phi
 	\right)} \right) \le \mathcal R\left( \phi \right)$ holds for all $\phi  \in
 {\mathcal F}$ and all $\Psi\in
 {\mathcal O}$. \\
To lay the foundation for the future development of a steering resource theory for Gaussian channels, another core objective of this paper is to investigate the structure of free superchannels, specifically Gaussian unsteerable superchannels and maximal Gaussian unsteerable superchannels.

This paper is structured as follows. In Section II, we review fundamental concepts related to continuous-variable (CV) systems, including Gaussian states, Gaussian channels, Gaussian unsteerable channels, and Gaussian quantum steering. In Section III, we formally define Gaussian steering-annihilating channels and Gaussian steering-breaking channels, analyze their structural properties, derive the necessary and sufficient conditions for a Gaussian channel to be classified as either type, and explore the relationships between these  channels. Section IV is dedicated to characterizing Gaussian unsteerable superchannels and maximal Gaussian unsteerable superchannels. Section V presents a concise summary of the work.

\section{Preliminaries}

In this section, we  briefly recall some notions and notations about  Gaussian states and  Gaussian quantum channels.

\subsection{ Gaussian states }

 Consider an $N$-mode CV system with state space $H=H_1\otimes H_2\otimes \cdots\otimes H_N$, where each $H_k \ (1\le k\le N)$ is an infinite-dimensional separable complex Hilbert space.
 Denote by $\mathcal{S}(H)$ the set of all quantum states (that is, positive bounded linear operators with trace 1) on $H$.

 For any state $\rho\in\mathcal{S}(H)$, its characteristic function $\chi_{\rho}$ is
 defined as
$$\chi_{\rho}(z)={\rm tr}(\rho W(z)),$$ where
$z=(x_{1}, y_{1}, \cdots, x_{N}, y_{N})^{\rm T}\in{\mathbb R}^{2N}$,
$W(z)=\exp(i{R}^{\rm T}z)$ is the Weyl displacement operator,
${R}=(R_1,R_2,\cdots,R_{2N})=(\hat{Q}_1,\hat{P}_1,\cdots,\hat{Q}_N,\hat{P}_N)$,
$\hat{Q}_k=(\hat{a}_k+{\hat{a}_k}^\dag)/\sqrt{2}$
and
$\hat{P_k}=-i({\hat{a}_k}-{\hat{a}_k}^\dag)/\sqrt{2}$
($k=1,2,\cdots,N$) are respectively the position and momentum operators. Here,
$\hat{a}_k^\dag$ and
$\hat{a}_k$ are the creation and annihilation operators in the $k$th
mode satisfying the Canonical Commutation Relation (CCR):
$$[\hat{a}_k,\hat{a}_l^\dag]=\delta_{kl}I\ {\rm and}
\ [\hat{a}_k^\dag,\hat{a}_l^\dag]=[\hat{a}_k,\hat{a}_l]=0,\ k,l=1,2,\cdots,N.$$ Particularly,  $\rho$ is called a Gaussian state
if $\chi_{\rho}(z)$ is of the form
\begin{eqnarray*}\label{N1}
	\chi_{\rho}(z)=\exp[-\frac{1}{4}z^{\rm T}\Gamma z+i{\mathbf
		d}^{\rm T}z],
\end{eqnarray*}
where  $$
\begin{array}{rl}{\mathbf d}=&(\langle\hat R_1 \rangle, \langle\hat R_2
\rangle, \ldots ,\langle\hat R_{2N} \rangle)^{\rm T}\\
=&({\rm tr}(\rho
R_1), {\rm tr}(\rho R_2), \ldots, {\rm tr}(\rho R_{2N}))^{\rm
	T}\in{\mathbb R}^{2N}\end{array}$$ is called the mean or the displacement
vector of $\rho$ and $\Gamma=(\gamma_{kl})\in \mathcal M_{2N}(\mathbb R)$ is called
the covariance matrix (CM) of $\rho$ defined by $\gamma_{kl}={\rm
	tr}[\rho
(\Delta\hat{R}_k\Delta\hat{R}_l+\Delta\hat{R}_l\Delta\hat{R}_k)]$
with $\Delta\hat{R}_k=\hat{R}_k-\langle\hat{R}_k\rangle$
\cite{SLP}. Here, $\mathcal M_d(\mathbb R)$ stands for the algebra of all
$d\times d$ matrices over the real field $\mathbb R$.
So,  any Gaussian state $\rho$ with CM $\Gamma$ and displacement
vector ${\mathbf d}$ will sometimes be represented as $\rho(\Gamma, {\mathbf	d})$.
Note that  $\Gamma$ is real symmetric and satisfies the
condition
\begin{eqnarray*}\label{relation1}
\Gamma +i\Omega_N\geq 0,
\end{eqnarray*}
where
$$ \Omega_N=\underbrace{\Omega\oplus\cdots\oplus\Omega}_{N} \ \ {\rm with}\ \ \Omega=\begin{pmatrix}0&1\\-1&0\end{pmatrix}.$$

 Now, divide the $N$-mode CV system into $m$-mode CV subsystem A and $n$-mode CV subsystem B,
with state space $H=H_A\otimes H_B$ and $N=m+n$. Assume that
$\rho$ is any $(m+n)$-mode bipartite Gaussian state. Then its CM $\Gamma_{\rho}$
can be written as
\begin{eqnarray}\label{N2}
	\Gamma_{\rho}=\left(\begin{array}{cc}A& C\\
		C^{\rm T} & B\end{array}\right),
\end{eqnarray}
where $A \in \mathcal M_{2m}({\mathbb R}),\  B
\in \mathcal M_{2n}({\mathbb R}), C\in \mathcal M_{2m\times 2n}({\mathbb R})$.
Particularly, if $n=m=1$, then  $\Gamma$ has the following standard form:
 \begin{equation}\label{standard}
	\Gamma = \left( {\begin{array}{*{20}{c}}
		a&0&c&0\\
		0&a&0&d\\
		c&0&b&0\\
		0&d&0&b
\end{array}} \right),
\end{equation}
where $a, b \ge 1, \ ab - c^2 \ge 1, \ ab - d^2 \ge 1$. For more details about Gaussian states, see \cite{WPG1,S1}.

\subsection{EPR steering}

 In a  	bipartite EPR steering scenario, Alice and Bob share  a bipartite state $\rho_{AB}\in{\mathcal S}(H_A\otimes H_B)$, and Alice performs
positive-operator-valued  measurements (POVMs)	on her subsystem to steer $\rho_{AB}$ on Bob's side.  If Alice performs a set of POVMs  ${\mathcal {MA}} = \{M_{a|x}\}_{a,x}$ (that is, $M_{a|x} \geq0$ with $\sum_a
M_{a|x} = I$ for each $x$), then the assemblage of sub-normalized ``conditional  states" of the
subsystem state $\rho_B={\rm Tr}_A(\rho_{AB})$ is $\{\rho^B_{a|x}\}_{a,x}$, where
$$\rho^B_{a|x}={\rm Tr}_A ((M^A_{a|x}\otimes I_B){\rho}_{AB}).$$
$\rho_{AB}$  is said to be unsteerable from A to B if every assemblage
$\{\rho^B_{a|x}\}_{a,x}$ on Bob's side can be explained by a  local hidden state (LHS) model as follows
$$\rho^B_{a|x}=\sum_\lambda p_\lambda p(a|x,\lambda)\sigma_\lambda,$$
where $\lambda$ is a hidden variable,
$p_\lambda$ is a distribution in $\lambda$,  $p(a|x,\lambda)$ are local
``response functions" of Alice, and  $\sigma_\lambda$
are ``hidden states" of Bob. Otherwise, $\rho_{AB}$ is called steerable (from A
to B) \cite{WJ}. 
Symmetrically,
we can define the steerability of $\rho_{AB}$ (from B to A).

 In CV systems, Gaussian POVM (GPOVM) plays an important role.
Recall that an
$N$-mode GPOVM
$\Pi=\{\Pi(\alpha)\}$  is defined  as
$$\Pi(\alpha)=\frac{1}{\pi^N}D(\alpha)\varpi
D^\dag(\alpha),$$ where $\alpha=(\alpha_1,\cdots, \alpha_N)^{\rm T} \in{\mathbb C^N}$, $D(\alpha)={\rm
	exp}[\sum_{j=1}^{N}(\alpha_j\hat{a}_j^\dag-\alpha_j^*\hat{a}_j)]$ is
the $N$-mode Weyl displacement operator and
$\varpi$  is a zero mean $N$-mode Gaussian state,
which is called the seed state of  $\Pi$ \cite{GC}.

For any bipartite Gaussian state, the authors in \cite{WJ}
derived a linear matrix inequality steering criterion via GPOVMs. Assume that $\rho\in{\mathcal S}(H_A\otimes H_B)$ is any $(m+n)$-mode Gaussian state with CM
$\Gamma_\rho$ in Eq.(\ref{N2}). As demonstrated in \cite{WJ},   $\rho$ is unsteerable (from A to B)by the subsystem
A's  all GPOVMs  if and only if
\begin{eqnarray}\label{N3}
	\Gamma_\rho+0_{2m}\oplus i\Omega_{n}\geq0;
\end{eqnarray}
and $\rho$ is unsteerable by the subsystem
B's all GPOVMs  if and only if
\begin{eqnarray*}
\Gamma_\rho+i\Omega_{m}\oplus0_{2n} \geq0.
\end{eqnarray*}
 Note that  any positive semidefnite matrix and its transpose matrix have the same eigenvalues. So
the condition \eqref{N3} is equivalent to  $$(\Gamma_\rho+0_{2m}\oplus i\Omega_{n})^{\rm T}\geq 0,$$
that is,
$$\Gamma_\rho-0_{2m}\oplus i\Omega_{n}\geq 0.$$
Hence
 $$\begin{array}{rl}&\Gamma_\rho+0_{2m}\oplus i\Omega_{n}\geq0\\
 	\Leftrightarrow&\Gamma_\rho \ge \pm(0_{2m}\oplus i\Omega_{n}) =\pm i(0_{2m}\oplus \Omega_{n}).\end{array}$$  

Denote  respectively by ${\mathcal {GS}}^{(N)}$ and ${\mathcal  {GS}^{(m,n)}_{\mathcal {US}(A \rightarrow  B)}}$ the set of all  $N$-mode Gaussian states,   the set of all $(m+n)$-mode  Gaussian states that are Gaussian
unsteerable (from A to B) by using Alice’s Gaussian measurements.  Particularly, if $N=m+n$, denote ${\mathcal {GS}}^{(N)}={\mathcal {GS}}^{(m,n)}. $ That is,
$${\mathcal {GS}}^{(N)}
	=\{\mbox{all N-mode Gaussian states}\}$$

and
$$\begin{array}{rl}&{\mathcal  {GS}^{(m,n)}_{\mathcal {US}(A \rightarrow  B)}}\\
	=&\{\rho \in{\mathcal {GS}}^{(m,n)} :\Gamma_{\rho}+0_{2m}\oplus i\Omega_{n}\geq0\}\\
	=&\{\rho \in {\mathcal {GS}}^{(m,n)} :\Gamma_{\rho}\geq \pm i(0_{2m}\oplus \Omega_{n})\}.\end{array}$$

 \subsection{ Gaussian  unsteerable  channels}

Recall that a Gaussian channel is a quantum channel which transforms any  Gaussian states into Gaussian states\cite{PMGH,CEGH}.  An $N$-mode Gaussian channel $\phi $ can be described by $\phi=\phi \left( {K,M,\mathbf d} \right)$, which acts on $\rho ( \Gamma_\rho ,\mathbf d_\rho )\in\mathcal {GS}(N)$ as
 \begin{eqnarray}\label{E11}
 	\mathbf d_\rho \mapsto K \mathbf d_\rho +\mathbf d,\ \
 	\Gamma_\rho  \mapsto K\Gamma_\rho K^{\rm T}+  M,
 \end{eqnarray}
where $\mathbf d \in \mathbb R^{2N}$ is a column displacement vector,  $K,M\in \mathcal M_{2N}(\mathbb R)$   satisfy $ M = M^{\rm T}$ and the completely positive condition
 \begin{eqnarray}\label{N0}
 	M + i\Omega_N  - iK\Omega_N {K^{\rm T}} \ge 0.
 \end{eqnarray}
Denote by 
${\mathcal {GC}}^{(N)}$ the set of all  $N$-mode Gaussian channels:
$${\mathcal {GC}}^{(N)}=\{  \mbox{all N-mode Gaussian quantum channels}\}. $$Particularly, if $N=m+n$, write ${\mathcal {GC}}^{(N)}={\mathcal {GC}}^{(m,n)}$.

Here, we give three known Gaussian channels which are used frequently.

{\it Attenuator channels.}  A single-mode attenuator channel $\phi_{\theta}^{n_{th}}(K,M,\mathbf d)$ is a deterministic Gaussian
	channel \cite{S1} with
\begin{eqnarray*}
	K  =
	\left({\begin{array}{*{20}{c}}\cos\theta&0\\ 0&\cos\theta\end{array}} \right) \ {\rm and}\
	M =
	\left( {\begin{array}{*{20}{c}}n_{th}\sin^2\theta &0\\0&n_{th}\sin^2\theta \end{array}} \right),
\end{eqnarray*}
where  $\theta \in [0,2\pi]$,  the thermal noise $n_{th}=2N_0+1\ge 1$ and  $N_0$ is mean number of thermal occupation. Particularly,  if $ n_{th}=1$, then the channel is  called  a pure lossy channel.

{\it Constant channels.}  An $N$-mode Gaussian channel  $\Theta(K,M,\mathbf d)$ is called a constant channel if  there exists some $N$-mode Gaussian state $\rho_0(\Gamma_0,  {\mathbf d}_0) \in \mathcal {GS}^{(N)}$ such that $\Theta(\rho)=\rho_0$  for all
$N$-mode Gaussian states 	$\rho\in  \mathcal{GS}^{(N)}$. In this case, $\Theta(K,M, \mathbf d)$ can be represented as $\Theta(K,M,\mathbf d)=\Theta(0, \Gamma_0,\mathbf {\mathbf d}_0)$.

{\it  Identity Gaussian channel.}  An $N$-mode Gaussian channel $\psi(K,M,\mathbf d) $ is called an identity channel if  $\psi(\rho )=\rho$ holds  for all $N$-mode Gaussian states
$\rho\in  \mathcal{GS}^{(N)}$. In this case, $K=I$,  $M=0$ and $\mathbf d=0$.

In \cite{YHQ}, the authors gave the definition of Gaussian unsteerable channels. Recall that any $(m+n)$-mode
Gaussian channel $\phi=\phi \left({K, M,\mathbf d} \right) $ is called Gaussian unsteerable from A to B if $K$ and $M$ satisfy the following relation
\begin{eqnarray}\label{N4}
	M + ({0_{2m}} \oplus i{\Omega _n}) -K\left( {0_{2m}} \oplus i{\Omega _n} \right){K^{\rm T}} \ge 0;
\end{eqnarray}
and  is called maximal Gaussian unsteerable from A to  B if  $\phi$ maps all $(m+n)$-mode Gaussian unsteerable states  from A to  B  into $(m+n)$-mode Gaussian unsteerable states  from A to  B.  

Let  $	{\mathcal {GC}}^{(m,n)}_{\mathcal {US}(A\rightarrow B)}$ and $	{\mathcal {GC}}^{(m,n)}_{\mathcal {MUS}(A\rightarrow B)}$ stand for the set of all  $(m+n)$-mode Gaussian unsteerable  (from A to  B)  channels   and the set of all $(m+n)$-mode maximal Gaussian unsteerable  (from A to  B)   channels, respectively, that is,
\[
\begin{aligned}
	{\mathcal {GC}}^{(m,n)}_{\mathcal {US}(A\rightarrow B)}
	=\{\phi(K,M,\mathbf d) \in {\mathcal {GC}}^{(m,n)}: \\
	M + (0_{2m} \oplus i{\Omega _n}) -K\left( 0_{2m} \oplus i{\Omega _n} \right){K^{\rm T}} \ge 0\},
\end{aligned}
\]
and
\[
	\begin{aligned}
{\mathcal {GC}}^{(m,n)}_{\mathcal {MUS}(A\rightarrow B)}
	=\{ \phi(K,M,\mathbf d) \in {\mathcal {GC}}^{(m,n)}: \\
	 \phi( \mathcal  {GS}^{(m,n)}_{\mathcal {US}(A \rightarrow  B)}) \subseteq \mathcal  {GS}^{(m,n)}_{\mathcal {US}(A \rightarrow  B)} \}.
\end{aligned}
\]

It is shown in \cite{YHQ} that  
the set  $	{\mathcal {GC}}^{(m,n)}_{\mathcal {US}(A\rightarrow B)} $is a proper  (but) large subset of  $	{\mathcal {GC}}^{(m,n)}_{\mathcal {MUS}(A\rightarrow B)}$.
 Obviously, the $(m+n)$-mode identity Gaussian channel   $\phi( I,0,0 ) \in 	{\mathcal {GC}}^{(m,n)}_{\mathcal {US}(A\rightarrow B)}$.

\subsection{Gaussian superchannels}

Recall that a superchannel is a completely positive linear map transforming
any quantum channels into quantum channels; and a Gaussian superchannel is
a superchannel transforming any Gaussian channels into Gaussian channels \cite{X11}. 

Denote by 
$${ \mathcal  {GSC}}^{(N)}=\{\mbox{all  N-mode Gaussian
	superchannels}   \}.$$
It is shown in \cite{X11} that any $N$-mode Gaussian superchannel $\Phi \in{\mathcal {GSC}}^{(N)}$ can be described by $\Phi \left( {A,E,Y, \nu }\right)$ as follows:  for any $N$-mode Gaussian channel $ \phi \left( {K,M,\mathbf d} \right) $,
$\Phi \left( {\phi \left( {K, M,\mathbf d}
	\right)} \right) = \phi '\left( {K', M',\mathbf d'} \right)$ with
\begin{eqnarray*}	K' = AK\Sigma_N E^{\rm T}\Sigma_N,\\
	M' = AMA^{\rm T} + Y, \\
	\mathbf d' = A\mathbf d +  \nu ,
\end{eqnarray*}
where $ A,E,Y \in \mathcal M_{2N}(\mathbb R)$, $Y = Y^{\rm T}$, $EE^{\rm T}=
{I_{2N}}$, $ \nu  \in {\mathbb R}^{2N}$,
$\Sigma_N=\underbrace{\Sigma\oplus\cdots\oplus\Sigma}_N$ with $\Sigma= {\begin{pmatrix}
		1&0\\
		0&{ - 1}
\end{pmatrix}}$, and
\begin{eqnarray}\label{1}
	Y + i\Omega_N  - iA\Omega_N A^{\rm T} \ge 0 , \nonumber\\
	i\Omega_N  - iE\Omega_N E^{\rm T} \ge 0.
\end{eqnarray}

 In addition, the author \cite{X11} also proved that any $N$-mode Gaussian superchannel $\Phi \left( {A,E,Y, \nu }\right)$ can be described in terms of compositions of Gaussian channels, that is,  for any $N$-mode Gaussian channel  $\phi \left( {K,M,\mathbf d} \right)$,  there exist two $N$-mode Gaussian channels ${\phi_1}\left( K_1, M_1,\mathbf d_1 \right)$ and  ${\phi _2}\left(	K_2, M_2, \mathbf d_2 \right) $ such that $\Phi ( \phi )= {\phi _2} \circ \phi  \circ {\phi _1}$.
	Particularly, one such representation is
		$$\begin{cases}K_1= \Sigma_N E^{\rm T} \Sigma_N, \ \ M_1 = 0,\ \ {\mathbf d_1} = 0;\\
			K_2 = A,\ \ M_2 = Y,\ \ {\mathbf d_2} = \upsilon.
		\end{cases}$$

\section{ Gaussian steering-annihilating   and  Gaussian steering-breaking channels }

In this section,  we will first give two  concepts of Gaussian steering-annihilating  channels and Gaussian   steering-breaking channels, and then discuss their properties.

	\begin{definition}\label{def4}
	Assume that  $\phi$ is  any   $(m+n)$-mode bipartite  Gaussian channel. We say that $\phi$   is    steering-annihilating  if it sends all $(m+n)$-mode Gaussian states    into either  $(m+n)$-mode Gaussian unsteerable states from A to B or $(m+n)$-mode Gaussian unsteerable states from B to A.
\end{definition}

	\begin{definition}\label{def41}
Assume that $\psi$ is any $m$-mode Gaussian channel. We say that  $\psi $  is  steering-breaking  if for any $(m+n)$-mode  Gaussian state $\rho$,   $(\psi \otimes I_K)( \rho)$ is always $(m+n)$-mode Gaussian unsteerable from A to B or from B to A,   where $K$ denotes arbitrary $n$-mode ancillary CV system.
\end{definition}

	  For the convinience,
denote by  $ {\mathcal {GC}}^{(m,n)}_{\mathcal {SA}}$,  ${\mathcal {GC}}^{(m,n)}_{\mathcal {SA} (A\rightarrow B)}$ the set of all  $(m+n)$-mode Gaussian steering-annihilating channels and 
the set of all $(m+n)$-mode  Gaussian steering-annihilating channels  sending all $(m+n)$-mode Gaussian states    into  $(m+n)$-mode Gaussian unsteerable states from A to B, respectively. Denote by $ {\mathcal {GC}}^{(N)}_{\mathcal {SB}}$,  $ {\mathcal {GC}}^{(N)}_{\mathcal {SB}(A\rightarrow B)}$ the set of all $N$-mode Gaussian steering-breaking channels and the set of all $N$-mode Gaussian steering-breaking channels from A to B, respectively.

 By Definitions \ref{def4} and \ref{def41},   the following useful property is obvious.

  \begin{proposition}  \label{P1}  If $\phi \in {\mathcal {GC}}^{(m,n)}_{\mathcal {SA}}$,  and   $\psi \in {\mathcal {GC}}^{(m,n)} $,  then $\phi \circ \psi \in {\mathcal {GC}}^{(m,n)}_{\mathcal {SA}}$; if $\phi \in {\mathcal {GC}}^{(N)}_{\mathcal {SB}}$ and $\psi \in {\mathcal {GC}}^{(N)} $,  then both $\phi \circ \psi$ and $\psi \circ \phi$ belong to  ${\mathcal {GC}}^{(N)}_{\mathcal {SB}}$.
  \end{proposition}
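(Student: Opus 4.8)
The plan is to derive the proposition by chasing the defining set-inclusions of Definition~\ref{def4}, together with two elementary facts: (a) the composition of two Gaussian channels is again a Gaussian channel --- immediate from the parametrization in Eq.~(\ref{E11}), since composing $\phi(K_2,M_2,\mathbf d_2)$ with $\phi(K_1,M_1,\mathbf d_1)$ produces $\phi(K_2K_1,\,K_2M_1K_2^{\rm T}+M_2,\,K_2\mathbf d_1+\mathbf d_2)$; and (b) for any Gaussian channel $\psi$ acting on the steering subsystem, the extension $\psi\otimes I_K$ never sends a Gaussian unsteerable state to a steerable one. Granting these, the steering-annihilating part follows at once: if $\phi\in\mathcal{GC}_{\mathcal{SA}}(H_A\otimes H_B)$ and $\psi\in\mathcal{GC}(H_A\otimes H_B)$, then $\psi$ keeps $\mathcal{GS}(H_A\otimes H_B)$ inside itself while $\phi$ maps it into $\mathcal{GS}_{\mathcal{US}}(H_A\otimes H_B)$, so $(\phi\circ\psi)(\mathcal{GS}(H_A\otimes H_B))\subseteq\phi(\mathcal{GS}(H_A\otimes H_B))\subseteq\mathcal{GS}_{\mathcal{US}}(H_A\otimes H_B)$; combined with fact~(a) this gives $\phi\circ\psi\in\mathcal{GC}_{\mathcal{SA}}(H_A\otimes H_B)$. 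Only this order is asserted, and rightly so: applying $\psi$ after $\phi$ may recreate steering since $\psi$ is arbitrary.

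For the steering-breaking part I would fix $\phi\in\mathcal{GC}_{\mathcal{SB}}(H)$ and $\psi\in\mathcal{GC}(H)$; by fact~(a) both $\phi\circ\psi$ and $\psi\circ\phi$ are Gaussian channels on $H$, so only their steering-breaking property needs checking. Here I use the elementary identities $(\phi\circ\psi)\otimes I_K=(\phi\otimes I_K)\circ(\psi\otimes I_K)$ and $(\psi\circ\phi)\otimes I_K=(\psi\otimes I_K)\circ(\phi\otimes I_K)$. For $\phi\circ\psi$: $\psi\otimes I_K$ maps every Gaussian state on $H\otimes K$ to a Gaussian state, and then $\phi\otimes I_K$ sends it into $\mathcal{GS}_{\mathcal{US}}(H\otimes K)$ by the hypothesis on $\phi$; hence $\phi\circ\psi\in\mathcal{GC}_{\mathcal{SB}}(H)$. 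For $\psi\circ\phi$: $\phi\otimes I_K$ first sends every Gaussian state on $H\otimes K$ into $\mathcal{GS}_{\mathcal{US}}(H\otimes K)$, after which fact~(b) keeps the result inside $\mathcal{GS}_{\mathcal{US}}(H\otimes K)$ under $\psi\otimes I_K$; hence $\psi\circ\phi\in\mathcal{GC}_{\mathcal{SB}}(H)$. So both parts reduce to fact~(b), which is the only step requiring a computation and therefore the main (if modest) obstacle.

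To establish fact~(b), write $\psi=\psi(K_\psi,M_\psi,\mathbf d_\psi)$ on the $m$-mode space $H$, so that $\psi\otimes I_K$ on $H\otimes K$ is the Gaussian channel with parameters $(K_\psi\oplus I_{2n},\,M_\psi\oplus 0_{2n},\,\mathbf d_\psi\oplus 0)$, where $n$ is the mode number of $K$. Adding the complete-positivity inequality Eq.~(\ref{N0}) for $\psi$ to its transpose gives $M_\psi\ge 0$. Since $0_{2m}\oplus i\Omega_n$ is block-diagonal and $(K_\psi\oplus I_{2n})(0_{2m}\oplus i\Omega_n)(K_\psi\oplus I_{2n})^{\rm T}=0_{2m}\oplus i\Omega_n$, one verifies that $M_\psi\oplus 0_{2n}+(0_{2m}\oplus i\Omega_n)-(K_\psi\oplus I_{2n})(0_{2m}\oplus i\Omega_n)(K_\psi\oplus I_{2n})^{\rm T}=M_\psi\oplus 0_{2n}\ge 0$, i.e.\ $\psi\otimes I_K\in\mathcal{GC}_{\mathcal{US}}(H\otimes K)\subseteq\mathcal{GC}_{\mathcal{MUS}}(H\otimes K)$ by \cite{YHQ}; consequently $\psi\otimes I_K$ maps $\mathcal{GS}_{\mathcal{US}}(H\otimes K)$ into itself. (Equivalently, the same congruence applied directly to an unsteerable covariance matrix $\Gamma$, which satisfies $\Gamma+0_{2m}\oplus i\Omega_n\ge 0$, together with the criterion Eq.~(\ref{N3}) of \cite{WJ}, gives the conclusion at once.) This closes the argument.
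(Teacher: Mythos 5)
Your proof is correct. The paper offers no argument at all here --- it simply declares the proposition ``obvious'' from Definition \ref{def4} --- so your write-up is the natural unpacking of that claim, and every step checks out: the composition formula for Gaussian channels, the identity $(\phi\circ\psi)\otimes I_K=(\phi\otimes I_K)\circ(\psi\otimes I_K)$, and the reduction of both steering-breaking claims to your fact (b). Fact (b) is indeed the only place where something has to be computed, and your computation is sound: adding the completely-positive condition Eq.~(\ref{N0}) to its transpose (equivalently, its complex conjugate) gives $M_\psi\ge 0$, and since $(K_\psi\oplus I_{2n})(0_{2m}\oplus i\Omega_n)(K_\psi\oplus I_{2n})^{\rm T}=0_{2m}\oplus i\Omega_n$, congruence of the unsteerability criterion Eq.~(\ref{N3}) by $K_\psi\oplus I_{2n}$ plus the positive noise term $M_\psi\oplus 0_{2n}$ shows $\psi\otimes I_K$ preserves $\mathcal{GS}_{\mathcal{US}}(H\otimes K)$; your alternative phrasing via $\psi\otimes I_K\in\mathcal{GC}_{\mathcal{US}}\subseteq\mathcal{GC}_{\mathcal{MUS}}$ of \cite{YHQ} is equally valid. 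Your aside that only the order $\phi\circ\psi$ can be asserted in the steering-annihilating case is also consistent with the statement. In short, you supply the small but genuine ingredient (local Gaussian channels on the steering party cannot create Gaussian steering) that the paper leaves implicit.
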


Next, we first give a sufficient condition for  Gaussian channels being Gaussian steering-annihilating. 

\begin{theorem}  \label{TH3}  Assume that $\phi =\phi\left( {K,M,\mathbf d} \right) \in {\mathcal {GC}}^{(m,n)}$ is any $(m+n)$-mode Gaussian channel. If $\phi$ satisfies one of the following conditions: 
		
 {\rm (1)}	 $M + ({0_{2m}} \oplus i{\Omega _n}) - iK\left( \Omega _m \oplus \Omega _n \right){K^{\rm T}} \ge 0,$
 
{\rm (2)} $M + ( i{\Omega _m} \oplus{0_{2n}}) - iK\left( \Omega _m \oplus \Omega _n \right){K^{\rm T}} \ge 0,$\\
then $\phi $ is steering-annihilating,  that is,
$\phi \in  {\mathcal {GC}}^{(m,n)}_{\mathcal {SA}}$.
\end{theorem}

\begin{proof}
	Assume that  $\rho  \in {\mathcal {GS}}^{(m,n)} $ is any Gaussian state  with CM $\Gamma_{\rho} $ in Eq.(\ref{N2}).  Then
	$\phi(\rho) $ has the CM $\Gamma_{\phi(\rho)} = K\Gamma_\rho {K^{\rm T}} + M$.
	 As
	$\Gamma_{\rho}  + i( \Omega_m \oplus \Omega_n ) \ge 0$,   we have  
	\begin{equation}\label{Th41}
			K(\Gamma_{\rho}+i( \Omega_m \oplus \Omega_n ))K^{\rm T} \ge 0 \end{equation} 
			for any $ K \in \mathcal M_{2(m+n)}(\mathbb R)$.  If  the condition (1) holds, combining  this assumption  with Ineq.\eqref{Th41},  one gets
\begin{eqnarray*}
& &\Gamma _{\phi ( \rho )} + ( 0_{2m} \oplus i{\Omega _n})\\
&=& K\Gamma_{\rho} {K^{\rm T}} + M + ( 0_{2m} \oplus i{\Omega _n} )\\
 &=& K(\Gamma_{\rho}+i( \Omega_m \oplus \Omega_n ))K^{\rm T}\\
& &+( M + ( 0_{2m} \oplus i\Omega_n)-K(i( \Omega_m \oplus \Omega_n ))K^{\rm T})\ge   0.
\end{eqnarray*}
It follows from  the steering criterion (Ineq.\eqref{N3})  that $\phi \left( \rho \right) \in \mathcal  {GS}^{(m,n)}_{\mathcal {US}(A \rightarrow  B)}$.

Similarly, one can show that, if the condition (2) holds, then  $\phi \left( \rho \right) \in \mathcal  {GS}^{(m,n)}_{\mathcal {US}(B \rightarrow  A)}$.
Hence $ \phi $ is Gaussian steering-annihilating.
\end{proof}

 Notice that the conditions in Theorem \ref{TH3} is only sufficient but not necessary for a channel being  steering-annihilating. In fact, there exist Gaussian steering-annihilating channels which do not satisfy this condition.

\begin{example}
	Take a $(1+1)$-mode Gaussian channel  $\phi_1=\phi(K_1,M_1,\mathbf{d_1})$, where	$M_1=I_4$ and
$${K_1} = \left( {\begin{array}{*{20}{c}}
 		{1.03}&0&0&0\\
 		0&{1.03}&0&0\\
 		0&0&{0.1}&0\\
 		0&0&0&{0.1}
 \end{array}} \right).$$
Then $\phi_1 $ is steering-annihilating from A to B, but does not satisfy the condition (1) in Theorem \ref{TH3}.
\end{example}

In fact, it is easily checked that $\phi \left( {{K_1},{M_1},{\rm d_1}} \right)$ satisfies the condition
		\[\begin{array}{rl}
	&	{M_1} + i\left( {{\Omega _1} \oplus {\Omega _1}} \right) - i{K_1}\left( {{\Omega _1} \oplus {\Omega _1}} \right)K_1^{\rm T}\\
		=& \left( {\begin{array}{*{20}{c}}
				1&{ - 0.0609i}&0&0\\
				{0.0609i}&1&0&0\\
				0&0&1&{0.99i}\\
				0&0&{ - 0.99i}&1
		\end{array}} \right)
		\ge 0,
	\end{array}\]
	but
	\[\begin{array}{rl}
	&	{M_1} + i\left( {{0_2} \oplus {\Omega _1}} \right) - i{K_1}\left( {{\Omega _1} \oplus {\Omega _1}} \right){K_1}^{\rm T}\\
		=& \left( {\begin{array}{*{20}{c}}
				1&{ - 1.0609i}&0&0\\
				{1.0609i}&1&0&0\\
				0&0&1&{0.99i}\\
				0&0&{ - 0.99i}&1
		\end{array}} \right)
		\ngeq 0.
	\end{array}\]
So $\phi$ does not satisfy the condition (1) in Theorem \ref{TH3}.

However, for any  $(1+1)$-mode Gaussian state $\rho$ with the standard CM    $\Gamma_\rho$ in Eq.\eqref{standard},
by a numerical calculation,	one can obtain
	\[\begin{array}{rl}
	&\Gamma_{\phi(\rho)}+\left( {{0_2} \oplus i{\Omega _1}} \right)=
	{K_1}\Gamma _\rho K_1^{\rm T} + {M_1} + \left( {{0_2} \oplus i{\Omega _1}} \right)\\
		=& {\small\left( {\begin{array}{*{20}{c}}
				{1.0609a+1}&0&{0.103c}&0\\
				0&{1.0609a + 1}&0&{0.103d}\\
				{0.103c}&0&{0.01b + 1}&i\\
				0&{0.103d}&{ - i}&{0.01b + 1}
		\end{array}} \right)}
		\ge 0,
	\end{array}\]
	which implies that $\phi$ is steering-annihilating (from A to B).

While  the condition (1) or (2)   in Theorem \ref{TH3} is not  a necessary condition, the subsequent  inequality in Theorem \ref{TH301} provides a sufficient and necessary condition for  a Gaussian channel becoming steering-annihilating.

\begin{theorem}  \label{TH301}   Assume that  $\phi =\phi\left( {K,M,\mathbf d} \right) \in {\mathcal {GC}}^{(m,n)}$ is any $(m+n)$-mode Gaussian channel. Then $\phi$ is Gaussian  steering-annihilating, that is, $\phi  \in  {\mathcal {GC}}^{(m,n)}_{\mathcal {SA}} $, if and only if 
\begin{equation}\label{N61}
\begin{aligned}
	&	\mathbf{w}^{\dag}M\mathbf{w} +|\mathbf{w}^{\dag}K\left( \Omega_{m} \oplus \Omega _n \right){K^{\rm T}}\mathbf{w}|\\
		\ge& |\mathbf{w}^{\dag}({0_{2m}} \oplus {\Omega _n})\mathbf{w}|
\end{aligned}	
\end{equation}
holds for  all $\mathbf{w} \in \mathbb{C}^{2(m+n)}$, 
 or 
\begin{equation}\label{N611}
		\begin{aligned}
			&	\mathbf{w}^{\dag}M\mathbf{w} +|\mathbf{w}^{\dag}K\left( \Omega_{m} \oplus \Omega _n \right){K^{\rm T}}\mathbf{w}|\\
			\ge& |\mathbf{w}^{\dag}( {\Omega _m} \oplus{0_{2n}})\mathbf{w}|
		\end{aligned}	
\end{equation}
holds for  all $\mathbf{w} \in \mathbb{C}^{2(m+n)}$. 
\end{theorem}

Take any  $N$-mode Gaussian channel
$\phi\in{\mathcal {GC}}^{(N)}$ with the state space $H$.
 Note that the pure state
 $$\left| {{\varphi _r}} \right\rangle  = \frac{1}{\cosh r}{\sum\limits_{j = 0}^\infty  {\left( {\tanh r} \right)} ^j}\left| j \right\rangle \left| j \right\rangle$$ is a two-mode Gaussian squeezed pure state, where $\{|j\rangle |j\rangle\}$ is  a tensor product of Fock states   and $r\in\mathbb R$ is the squeezed parameter.
 Correspondingly, $2N$-mode Gaussian squeezed pure state can be written as $\left| {{\psi _r}} \right\rangle=\left| {{\varphi _r}} \right\rangle^{\otimes N}.$
Define 
\begin{eqnarray*}\label{N111}
	{\rho_\phi }=  ( \phi  \otimes I_H)( \left| {{\psi _r}} \right\rangle \left\langle {{\psi _r}} \right|).
\end{eqnarray*}
Clearly, $\rho_\phi\in{\mathcal {GS}}^{(N,N)}$ is a $(N+N)$-mode Gaussian state. By the unsteerability of $\rho_\phi$, we can  give   necessary and sufficient conditions for Gaussian steering-breaking  channels.
	
\begin{theorem} \label{TH4}  Assume that  $\phi=\phi \left( {K,M,\mathbf d} \right) \in {\mathcal {GC}}^{(N)}$ is any $N$-mode  Gaussian channel.
	 Then the following statements are equivalent.
			
(1) $\phi$ is Gaussian steering-breaking, that is, $\phi   \in   {\mathcal{GC}}^{(N)}_{\mathcal {SB}}.$
			
(2) $ {\rho _\phi } \in {\mathcal  {GS}^{(N,N)}_{\mathcal {US}(A \rightarrow  B)}}$ or $ {\rho _\phi } \in {\mathcal  {GS}^{(N,N)}_{\mathcal {US}(B \rightarrow  A)}}$.
			
(3) The matrices  $ K$ and $M $ satisfy the condition
$$M-iK \Omega_N K^{\rm T} \ge 0 $$
or $$M+i\Omega_N \ge 0.$$
\end{theorem}
		
Proofs of Theorems 6 and 7 are given in Appendix.

In the end of this section, we will investigate the  relationship between   Gaussian steering-breaking  channels, Gaussian steering-annihilating  channels and
maximal   Gaussian unsteerable channels. Denote by 
$${\mathcal {GC}}^{(m,n)}_{\mathcal {MUS}}={\mathcal {GC}}^{(m,n)}_{\mathcal {MUS}(A \rightarrow B)} \bigcup {\mathcal {GC}}^{(m,n)}_{\mathcal {MUS}(B \rightarrow A)}. $$

By their definitions, it is obvious that

$$	{\mathcal {GC}}^{(m,n)}_{\mathcal {SA}}={\mathcal {GC}}^{(m,n)}_{\mathcal {SA}(A \rightarrow B)} \bigcup {\mathcal {GC}}^{(m,n)}_{\mathcal {SA}(B \rightarrow A)},$$
$$	{\mathcal {GC}}^{(m,n)}_{\mathcal {SB}}={\mathcal {GC}}^{(m,n)}_{\mathcal {SB}(A \rightarrow B)} \bigcup {\mathcal {GC}}^{(m,n)}_{\mathcal {SB}(B \rightarrow A)}$$
	 and
	$${\mathcal {GC}}^{(m,n)}_{\mathcal {SA}} \subset{\mathcal {GC}}^{(m,n)}_{\mathcal {MUS}}.$$
As a consequence
	of Proposition \ref{P1}, for any  $\phi\in {\mathcal {GC}}^{(m,n)}_{\mathcal {SA}} $  and  $\psi\in {\mathcal{GC}}^{(m+n)}_{\mathcal {SB}}$,  we have $$\phi \circ \psi \in{\mathcal {GC}}^{(m,n)}_{\mathcal {SA}} \cap {\mathcal{GC}}^{(m+n)}_{\mathcal {SB}}, $$
which means that  there are Gaussian channels which are
	simultaneously steering-breaking and steering-annihilating.
	
Next,  take an $(m+n)$-mode  constant channel $\Theta=\Theta(0,\Gamma_0,\mathbf d_0)$. Obviously,  by Theorem \ref{TH4},
$\Theta $  is steering-breaking.  However,  if  $\Gamma_0 +(0_{2m}\oplus i\Omega_{n})\ngeq 0 $,  and $\Gamma_0 +(i\Omega_{m}\oplus 0_{2n} )\ngeq 0$,  $\Theta$ is
neither steering-annihilating  nor maximal Gaussian unsteerable. So 
$$ {\mathcal{GC}}^{(m+n)}_{\mathcal {SB}}  \not\subset {\mathcal {GC}}^{(m,n)}_{\mathcal {SA} }$$ and
$$ {\mathcal{GC}}^{(m+n)}_{\mathcal {SB}}  \not\subset {\mathcal {GC}}^{(m,n)}_{\mathcal {MUS}}.$$

There also exist Gaussian steering-annihilating channels which are not Gaussian steering-breaking, that is,
$$ {\mathcal {GC}}^{(m,n)}_{\mathcal {SA}}  \not\subset  {\mathcal{GC}}^{(m+n)}_{\mathcal {SB}}.$$
For example, take  $\tilde{\phi}=\phi \otimes I_B \in {\mathcal {GC}}^{(1+1)}$,  where $\phi=\phi(\cos\theta I_2,\sin^2\theta I_2,\mathbf d_0)$ is a single-mode attenuator channel. Then $ \tilde{\phi} =\tilde{\phi}(K,M,\mathbf d) $ can be represented by
\begin{eqnarray*}
	K  =
	\left({\begin{array}{*{20}{c}}\cos\theta I_2&0_2\\ 0_2&I_2\end{array}} \right) {\rm and}\
		M =
		\left( {\begin{array}{*{20}{c}}\sin^2\theta I_2&0_2\\0_2&0_2\end{array}} \right).
\end{eqnarray*}
It can be easily checked that, when $0 \le \cos\theta \le \frac{\sqrt{2}}{2}$,
\begin{eqnarray*}		
&	M + ({0_2} \oplus i{\Omega _1}) - iK\left( \Omega _1 \oplus \Omega _1\right){K^T}\\
	=&\left(\begin{array}{cccc}\sin^2\theta &-i\cos^2\theta&0&0\\i\cos^2\theta&\sin^2\theta &0&0\\0&0&0&0\\
		0&0&0&0\end{array}\right)\ge 0,
\end{eqnarray*}	
 but
\begin{eqnarray*}
		&M - iK{\Omega _{2}}K^T
		 = M - iK\left( \Omega _1 \oplus \Omega_1 \right)K^T\\
		=&\left(\begin{array}{cccc}\sin^2\theta &-i\cos^2\theta&0&0\\i\cos^2\theta&\sin^2\theta&0&0\\0&0&0&-i\\
			0&0&i&0\end{array}\right)\ngeq 0.
\end{eqnarray*}	
 
 and \begin{eqnarray*}
 	&M + i{\Omega _{2}}
 	= M + i\left( \Omega _1 \oplus \Omega_1 \right)\\
 	=&\left(\begin{array}{cccc}\sin^2\theta &i&0&0\\-i&\sin^2\theta&0&0\\0&0&0&i\\
 		0&0&-i&0\end{array}\right)\ngeq 0.
 \end{eqnarray*}	
Thus, by  Theorem \ref{TH3} and Theorem \ref{TH4}, $\tilde{\phi} $ is steering-annihilating, but  not steering-breaking.

For the relation of these Gaussian channels,
see Fig.1.
  \begin{figure}[h]
 	\centering
 	\includegraphics[ width=0.5\textwidth]{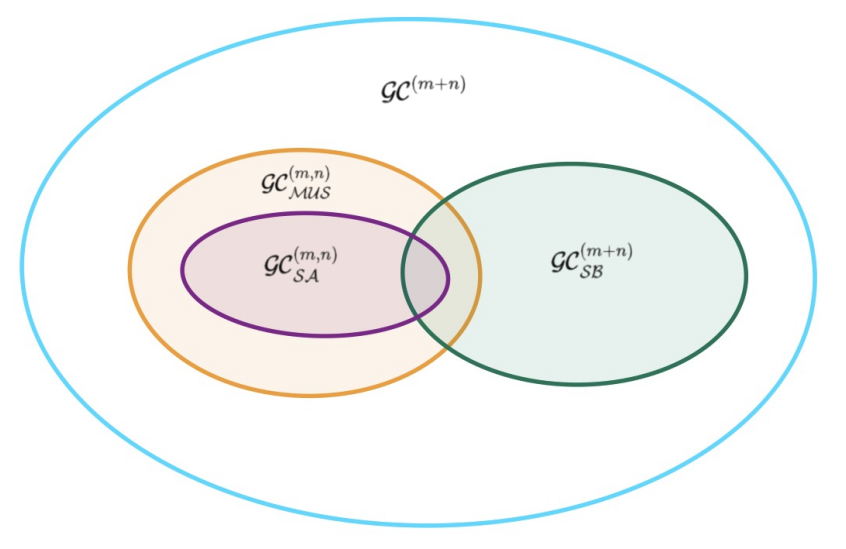}\\
 	 Fig. 1. The  relationship between   Gaussian steering-breaking  channels, Gaussian steering-annihilating  channels and
 	maximal  Gaussian unsteerable channels .
 \end{figure}

\section{Gaussian unsteerable superchannels}

In this section, we discuss two special type of Gaussian unsteerable  superchannels, that is, maximal Gaussian unsteerable superchannels and Gaussian unsteerable superchannels.
Here, we say that  {\it a Gaussian  superchannel  is  Gaussian unsteerable from A to B if it
maps any  Gaussian unsteerable (from A to B) channels into  Gaussian unsteerable  (from A to B)  channels; and is maximal Gaussian unsteerable  from A to B  if it
maps any maximal Gaussian unsteerable  (from A to B)  channels into maximal Gaussian unsteerable  (from A to B) channels. }

Denote by 
	${\mathcal {GSC}}^{(m,n)}_{\mathcal {MUS}(A \rightarrow B)}$
the set of all $(m+n)$-mode
maximal Gaussian unsteerable  (from A to B)  superchannels, that is,
$$\begin{array}{rl}&{\mathcal {GSC}}^{(m,n)}_{\mathcal {MUS}(A \rightarrow B)}\\
	=\{& \Phi \in {\mathcal {GSC}}^{(m,n)}: \\
	&\Phi({\mathcal {GC}}^{(m,n)}_{\mathcal {MUS}(A \rightarrow B)})\subseteq{\mathcal {GC}}^{(m,n)}_{\mathcal {MUS}(A \rightarrow B)} \},\end{array}$$
and by  ${\mathcal {GSC}}^{(m,n)}_{\mathcal {US}(A \rightarrow B)}$ the set of all  $(m+n)$-mode
 Gaussian unsteerable  (from A to B) superchannels, that is,
 $$\begin{array}{rl}&{\mathcal {GSC}}^{(m,n)}_{\mathcal {US}(A \rightarrow B)}\\
 	=\{& \Phi \in {\mathcal {GSC}}^{(m,n)}: \\
 	&\Phi({\mathcal {GC}}^{(m,n)}_{\mathcal {US}(A \rightarrow B)})\subseteq{\mathcal {GC}}^{(m,n)}_{\mathcal {US}(A \rightarrow B)} \},\end{array}$$

To characterize maximal Gaussian unsteerable superchannels, it is necessary to discuss the structure of
maximal Gaussian unsteerable channels.
By a similar argument to that of Theorem \ref{TH301},  we can obtain a necessary and sufficient condition for channels to be maximal unsteerable.

\begin{theorem}  \label{TH300}  Assume that  $\phi =\phi\left( {K,M,\mathbf d} \right) \in {\mathcal {GC}}^{\left(m,n \right)}$ is any $(m+n)$-mode Gaussian channel.
	Then   $\phi  \in	{\mathcal {GC}}^{(m,n)}_{\mathcal {MUS}(A \rightarrow B)}$ if and only if
\begin{equation} \label{N51}
\begin{aligned}
	&\mathbf{w}^{\dag}M\mathbf{w} +|\mathbf{w}^{\dag}K\left( 0_{2m} \oplus \Omega _n \right){K^{\rm T}}\mathbf{w}| \\
	\ge& |\mathbf{w}^{\dag}({0_{2m}} \oplus {\Omega _n})\mathbf{w}|
\end{aligned}	
\end{equation}
holds for  all $\mathbf{w} \in \mathbb{C}^{2(m+n)}$.
\end{theorem}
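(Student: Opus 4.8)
The plan is to follow the proof of Theorem~\ref{TH301} almost verbatim, replacing the set of all Gaussian states by the smaller cone ${\mathcal {GS}}_{\mathcal {US}}(H_A\otimes H_B)$ throughout. The only structural change is that an arbitrary Gaussian state obeys $\Gamma_\rho\ge\pm i(\Omega_m\oplus\Omega_n)$, whereas a Gaussian unsteerable state obeys only the weaker bound $\Gamma_\rho\ge\pm i(0_{2m}\oplus\Omega_n)$; this is exactly what turns the term $K(\Omega_m\oplus\Omega_n)K^{\rm T}$ of Theorem~\ref{TH301} into $K(0_{2m}\oplus\Omega_n)K^{\rm T}$ here.

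For sufficiency I would take any $\rho\in{\mathcal {GS}}_{\mathcal {US}}(H_A\otimes H_B)$, so $\Gamma_\rho\ge\pm i(0_{2m}\oplus\Omega_n)$, and fix $\mathbf{w}\in\mathbb{C}^{2(m+n)}$. Applying the bound on $\Gamma_\rho$ to the vector $K^{\rm T}\mathbf{w}$ gives $\mathbf{w}^{\dag}K\Gamma_\rho K^{\rm T}\mathbf{w}\ge|\mathbf{w}^{\dag}K(0_{2m}\oplus\Omega_n)K^{\rm T}\mathbf{w}|$, so that $\mathbf{w}^{\dag}\Gamma_{\phi(\rho)}\mathbf{w}=\mathbf{w}^{\dag}K\Gamma_\rho K^{\rm T}\mathbf{w}+\mathbf{w}^{\dag}M\mathbf{w}\ge|\mathbf{w}^{\dag}K(0_{2m}\oplus\Omega_n)K^{\rm T}\mathbf{w}|+\mathbf{w}^{\dag}M\mathbf{w}\ge|\mathbf{w}^{\dag}(0_{2m}\oplus\Omega_n)\mathbf{w}|\ge\pm i\mathbf{w}^{\dag}(0_{2m}\oplus\Omega_n)\mathbf{w}$ by the displayed hypothesis (using that $\mathbf{w}^{\dag}(0_{2m}\oplus\Omega_n)\mathbf{w}$ is purely imaginary). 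Hence $\Gamma_{\phi(\rho)}\ge\pm i(0_{2m}\oplus\Omega_n)$, i.e. $\phi(\rho)\in{\mathcal {GS}}_{\mathcal {US}}(H_A\otimes H_B)$, so $\phi\in{\mathcal {GC}}_{\mathcal {MUS}}(H_A\otimes H_B)$.

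For necessity, assume $\phi\in{\mathcal {GC}}_{\mathcal {MUS}}(H_A\otimes H_B)$ and fix $\mathbf{w}$. If $\mathbf{w}^{\dag}(0_{2m}\oplus\Omega_n)\mathbf{w}=0$ the asserted inequality is trivial, since $M\ge0$ for every Gaussian channel (evaluate the completely positive condition \eqref{N0} on real vectors). Otherwise, for every $\rho\in{\mathcal {GS}}_{\mathcal {US}}(H_A\otimes H_B)$ maximal unsteerability of $\phi$ gives $K\Gamma_\rho K^{\rm T}+M\ge\pm i(0_{2m}\oplus\Omega_n)$, hence $(K^{\rm T}\mathbf{w})^{\dag}\Gamma_\rho(K^{\rm T}\mathbf{w})+\mathbf{w}^{\dag}M\mathbf{w}\ge|\mathbf{w}^{\dag}(0_{2m}\oplus\Omega_n)\mathbf{w}|$. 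Taking the infimum of the left-hand side over all Gaussian unsteerable $\rho$ and using the identity $\inf_{\rho\in{\mathcal {GS}}_{\mathcal {US}}}(K^{\rm T}\mathbf{w})^{\dag}\Gamma_\rho(K^{\rm T}\mathbf{w})=|\mathbf{w}^{\dag}K(0_{2m}\oplus\Omega_n)K^{\rm T}\mathbf{w}|$ would then give $\mathbf{w}^{\dag}M\mathbf{w}+|\mathbf{w}^{\dag}K(0_{2m}\oplus\Omega_n)K^{\rm T}\mathbf{w}|\ge|\mathbf{w}^{\dag}(0_{2m}\oplus\Omega_n)\mathbf{w}|$, which is the assertion.

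The main obstacle is this last infimum identity, the analog for the cone ${\mathcal {GS}}_{\mathcal {US}}$ of the fact $\inf_{\sigma\ge\pm i\Omega_N}\mathbf{v}^{\dag}\sigma\mathbf{v}=|\mathbf{v}^{\dag}\Omega_N\mathbf{v}|$ used in Theorem~\ref{TH301}. The lower bound is immediate from unsteerability, but attainability must be checked with care, since the free pure Gaussian states that saturate the unconstrained infimum are in general entangled, hence steerable, and so are no longer admissible minimizers. To settle it I would identify, for each $\mathbf{v}=(\mathbf{v}_A,\mathbf{v}_B)$, a one-parameter family of Gaussian unsteerable states saturating the bound in the direction $\mathbf{v}$ --- for instance product states $\rho_A\otimes\rho_B$ whose $B$-factor is the pure $n$-mode state realizing $\mathbf{v}_B^{\dag}\Gamma_{\rho_B}\mathbf{v}_B=|\mathbf{v}_B^{\dag}\Omega_n\mathbf{v}_B|$ while $\rho_A$ is strongly squeezed, and, where these do not suffice, Gaussian states with a squeezed $A$-block and a correlation block $C$ tuned so that the conditional covariance matrix $B-C^{\rm T}A^{-1}C$ of the $B$-subsystem remains a legitimate covariance matrix (the Schur-complement form of unsteerability). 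Verifying that such a family drives $\mathbf{v}^{\dag}\Gamma_\rho\mathbf{v}$ down to $|\mathbf{v}^{\dag}(0_{2m}\oplus\Omega_n)\mathbf{v}|$ --- after a symplectic normal-form reduction to the single-mode case --- is the technical core; once it is in hand, the two directions combine exactly as in Theorem~\ref{TH301}.
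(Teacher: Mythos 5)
Your sufficiency direction is fine and matches the intended adaptation of Theorem~\ref{TH301}. The problem is the necessity direction, and it sits exactly at the step you yourself flag as the ``technical core'': the identity $\inf_{\rho\in{\mathcal {GS}}_{\mathcal {US}}}\mathbf{v}^{\dag}\Gamma_\rho\mathbf{v}=|\mathbf{v}^{\dag}(0_{2m}\oplus\Omega_n)\mathbf{v}|$ is not just unproven in your sketch --- it is false as stated. Every unsteerable Gaussian state is in particular a bona fide quantum state, so its CM also satisfies $\Gamma_\rho\ge\pm i\,\Omega_{m+n}$, and hence $\mathbf{v}^{\dag}\Gamma_\rho\mathbf{v}\ge|\mathbf{v}^{\dag}\Omega_{m+n}\mathbf{v}|$ for every admissible $\rho$. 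Take, say, $m=n=1$ and $\mathbf{v}=(1,i,0,0)^{\rm T}$: then $|\mathbf{v}^{\dag}(0_{2}\oplus\Omega_1)\mathbf{v}|=0$ while $\mathbf{v}^{\dag}\Gamma_\rho\mathbf{v}=\mathbf{v}_A^{\dag}A\mathbf{v}_A\ge|\mathbf{v}_A^{\dag}\Omega_1\mathbf{v}_A|=2$ for \emph{all} states, unsteerable or not. So no family of product states with a squeezed $A$-factor, and no tuning of the correlation block $C$, can drive the quadratic form down to the claimed value in such directions; the best one can hope for is $\inf=\max\{|\mathbf{v}^{\dag}\Omega_{m+n}\mathbf{v}|,\,|\mathbf{v}^{\dag}(0_{2m}\oplus\Omega_n)\mathbf{v}|\}$.

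Consequently your necessity argument only yields $\mathbf{w}^{\dag}M\mathbf{w}+f(K^{\rm T}\mathbf{w})\ge|\mathbf{w}^{\dag}(0_{2m}\oplus\Omega_n)\mathbf{w}|$ with $f(\mathbf{v})\ge\max\{|\mathbf{v}^{\dag}\Omega_{m+n}\mathbf{v}|,|\mathbf{v}^{\dag}(0_{2m}\oplus\Omega_n)\mathbf{v}|\}$, which does not give the asserted inequality for those $\mathbf{w}$ with $|\mathbf{w}^{\dag}K\Omega_{m+n}K^{\rm T}\mathbf{w}|>|\mathbf{w}^{\dag}K(0_{2m}\oplus\Omega_n)K^{\rm T}\mathbf{w}|$ (and for generic $K$ such $\mathbf{w}$ exist). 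Closing this gap requires an additional input that your plan (and, frankly, the paper's one-line ``similar argument to Theorem~\ref{TH301}'') does not supply: either restrict the infimum identity to directions with $|\mathbf{v}^{\dag}\Omega_{m+n}\mathbf{v}|\le|\mathbf{v}^{\dag}(0_{2m}\oplus\Omega_n)\mathbf{v}|$ and prove attainability there (your Schur-complement construction does work in that regime, as a two-mode squeezed-thermal example shows), and then handle the remaining directions by invoking extra structure of the channel itself, e.g.\ the complete-positivity constraint \eqref{N0} on $(K,M)$; or find a different argument altogether. As it stands, the necessity half of your proposal rests on a lemma that fails, so the proof is incomplete.
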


Thus, by Theorem
\ref{TH300},  we can give sufficient conditions of Gaussian superchannels becoming maximal unsteerable and unsteerable, respectively.

\begin{theorem}  \label{TH501}
	Assume that $\Phi=\Phi \left( A,E,Y,\nu \right) \in  {\mathcal {GSC}}^{(m,n)} $ is  any $(m+n)$-mode  Gaussian superchannel.
If either 

(1) there exist maximal Gaussian unsteerabe channels $\chi_1=\chi_1( K_1, M_1,\mathbf d_1 )$, $\chi_2=\chi_2(K_2, M_2, \mathbf d_2 ) \in  {\mathcal {GC}}^{(m,n)}_{\mathcal {MUS}(A \rightarrow B)}$
	such that  $\Phi(\phi)= \chi_2 \circ \phi \circ \chi_1$ for all  $ \phi \left( {K,M,\mathbf d} \right) \in
\mathcal {GC}^{(m,n)}$, 
	or
	
	(2)	the matrices $A,E$ and $Y$ satisfy
	$$\begin{array}{rl}
		&\mathbf{w}^{\dag}Y\mathbf{w}+|\mathbf{w}^{\dag} A\left( 0_{2m}\oplus \Omega _n \right)A^{\rm T}\mathbf{w}| \\
		\ge&|\mathbf{w}^{\dag}( 0_{2m} \oplus \Omega _n )\mathbf{w}|
	\end{array}$$
	and
	$$\begin{array}{rl}
		&|\mathbf{w}^{\dag}\Sigma_{m+n}E^{\rm T} \Sigma_{m+n} \left( 0_{2m} \oplus \Omega _n\right)\Sigma_{m+n} E \Sigma_{m+n}\mathbf{w}| \\
		\ge&|\mathbf{w}^{\dag}( 0_{2m} \oplus \Omega_n ) \mathbf{w}|
	\end{array}	$$
	for all $\mathbf{w} \in \mathbb{C}^{2(m+n)}$,
then  $\Phi\in {\mathcal {GSC}}^{(m,n)}_{\mathcal {MUS}(A \rightarrow B)}$.
	
	If either
	
	(3)  there exist Gaussian unsteerabe channels $\chi_1=\chi_1( K_1, M_1,\mathbf d_1 )$, $\chi_2=\chi_2(K_2, M_2, \mathbf d_2 )\in {\mathcal {GC}}^{(m,n)}_{\mathcal {US}(A \rightarrow B)}$  such that  $\Phi(\phi)= \chi_2 \circ \phi \circ \chi_1$ for all  $ \phi \left( {K,M,\mathbf d} \right) \in \mathcal {GC}^{(m,n)}$, 
	or
	
	(4)	the matrices $A,E$ and $Y$ satisfy $$	Y + \left( 0_{2m} \oplus i\Omega _n \right) - A\left(0_{2m} \oplus i\Omega _n \right){A^{\rm T}} \ge 0$$	and
$$
		\left( 0_{2m} \oplus i\Omega_n \right) -
		E \left( 0_{2m} \oplus i\Omega _n\right)E^{\rm T} \ge 0,$$ then 
$\Phi\in {\mathcal {GSC}}^{(m,n)}_{\mathcal {US}(A \rightarrow B)}$.
\end{theorem}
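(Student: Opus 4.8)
\emph{Proof proposal.} The plan is to route all four hypotheses through the multiplicative form of Gaussian superchannels supplied by Theorem~\ref{TH2}(2), and then read off the conclusion from Theorem~\ref{TH300} together with the covariance-matrix bookkeeping already carried out just before the statement. By Theorem~\ref{TH2}(2), the given $(m+n)$-mode superchannel $\Phi(A,E,Y,\nu)$ always factors as $\Phi(\phi)=\chi_2\circ\phi\circ\chi_1$ with $\chi_1=\chi_1(\Sigma_{m+n}E^{\rm T}\Sigma_{m+n},0,0)$ and $\chi_2=\chi_2(A,Y,\nu)$, and the image channel $\phi'=\Phi(\phi)$ carries the data $K'=AK\Sigma_{m+n}E^{\rm T}\Sigma_{m+n}$ and $M'=AMA^{\rm T}+Y$. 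Hence the whole question becomes: when do the properties ``maximal unsteerable'' and ``unsteerable'' survive the composition $\chi_1,\phi,\chi_2$?

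For the maximal-unsteerable conclusion under~(1), I would simply observe that $\mathcal{GC}_{\mathcal{MUS}}(H_A\otimes H_B)$ is closed under composition of channels, because every member maps $\mathcal{GS}_{\mathcal{US}}(H_A\otimes H_B)$ into itself; thus for $\phi,\chi_1,\chi_2\in\mathcal{GC}_{\mathcal{MUS}}$ the channel $\Phi(\phi)=\chi_2\circ\phi\circ\chi_1$ again lies in $\mathcal{GC}_{\mathcal{MUS}}$, so $\Phi\in\mathcal{SGC}_{\mathcal{MUS}}(H_A\otimes H_B)$. Equivalently one may run the explicit covariance-matrix propagation displayed in the preceding discussion: feed an arbitrary $\rho\in\mathcal{GS}_{\mathcal{US}}$ through $\chi_1$, then $\phi$, then $\chi_2$, using the representation (from~\cite{MKIB}) of any unsteerable CM as $0_{2m}\oplus Q+P$ with $Q\ge i\Omega_n$ and $P\ge0$ as a certificate at each stage, and arrive at $K'\Gamma_\rho (K')^{\rm T}+M'\ge\pm i(0_{2m}\oplus\Omega_n)$. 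For~(2), I would apply Theorem~\ref{TH300} to the two factors: the first displayed inequality is precisely its criterion for $\chi_2(A,Y,\nu)$ (with $K=A$, $M=Y$), and the second is precisely its criterion for $\chi_1(\Sigma_{m+n}E^{\rm T}\Sigma_{m+n},0,0)$ (with $K=\Sigma_{m+n}E^{\rm T}\Sigma_{m+n}$, hence $K^{\rm T}=\Sigma_{m+n}E\Sigma_{m+n}$, and $M=0$); this puts both factors in $\mathcal{GC}_{\mathcal{MUS}}$ and reduces~(2) to~(1).

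For the unsteerable conclusion under~(3), closure under composition is not automatic (since $\mathcal{GC}_{\mathcal{US}}$ is a proper subset of $\mathcal{GC}_{\mathcal{MUS}}$), so I would carry out the direct semidefinite estimate of $M'+(0_{2m}\oplus i\Omega_n)-K'(0_{2m}\oplus i\Omega_n)(K')^{\rm T}$ exactly as in the long display preceding the statement. The ingredients are the identity $\Sigma_{m+n}(0_{2m}\oplus i\Omega_n)\Sigma_{m+n}=-(0_{2m}\oplus i\Omega_n)$ and $EE^{\rm T}=I$; invariance of the positive-semidefinite order under conjugation by a real matrix; the inequality~\eqref{N4} for $\phi$; and the inequalities~\eqref{N7} for $\chi_2$ and~\eqref{N73} for $\chi_1$. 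Regrouping these as in the display yields $M'+(0_{2m}\oplus i\Omega_n)-K'(0_{2m}\oplus i\Omega_n)(K')^{\rm T}\ge0$, that is, $\Phi(\phi)\in\mathcal{GC}_{\mathcal{US}}$, so $\Phi\in\mathcal{SGC}_{\mathcal{US}}(H_A\otimes H_B)$. For~(4), Ineq.~\eqref{N7} is exactly the~\eqref{N4}-condition for $\chi_2(A,Y,\nu)$, while Ineq.~\eqref{N72} implies~\eqref{N73} (via the same $\Sigma_{m+n}$-identity and the orthogonality $EE^{\rm T}=I$), and~\eqref{N73} is exactly the~\eqref{N4}-condition for $\chi_1(\Sigma_{m+n}E^{\rm T}\Sigma_{m+n},0,0)$; hence both factors lie in $\mathcal{GC}_{\mathcal{US}}$ and~(4) reduces to~(3).

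The step carrying the real weight is~(1) via the covariance-matrix route: one must confirm that the form $0_{2m}\oplus Q+P$ with $Q\ge i\Omega_n$, $P\ge0$ is genuinely regenerated after each of the three applications of a maximal-unsteerable channel, since it is invoked three times in succession and the argument collapses if a transported CM leaves the unsteerable cone — but this is immediate from the very definition of $\mathcal{GC}_{\mathcal{MUS}}$, which is exactly why the short composition argument works at all. Everything else — the semidefinite estimate in~(3) and the reductions (2)$\to$(1) and (4)$\to$(3) — is routine once Theorem~\ref{TH300} and the identity $\Sigma_{m+n}(0_{2m}\oplus i\Omega_n)\Sigma_{m+n}=-(0_{2m}\oplus i\Omega_n)$ are available.
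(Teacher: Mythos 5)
Your proposal is correct and follows essentially the same route as the paper's own argument (the discussion preceding the theorem): the factorization $\Phi(\phi)=\chi_2\circ\phi\circ\chi_1$ from Theorem~\ref{TH2}(2), the composition/covariance-matrix propagation for case (1), the reduction of (2) to (1) by reading the two displayed inequalities as the Theorem~\ref{TH300} criteria for $\chi_2(A,Y,\nu)$ and $\chi_1(\Sigma_{m+n}E^{\rm T}\Sigma_{m+n},0,0)$, and the semidefinite estimate combining Ineqs.~\eqref{N4}, \eqref{N7} and \eqref{N73} for cases (3)--(4). The only minor quibble is your aside that closure of ${\mathcal {GC}}_{\mathcal {US}}$ under composition is ``not automatic'': the direct estimate you (and the paper) carry out is precisely a proof of that closure, so the remark is immaterial to the argument.
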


Proofs of Theorems 8 and 9 are given in Appendix.

Notice that the converse of Theorem \ref{TH501}  may not be true. In fact, if ${\chi_2}$ is a steering-annihilating Gaussian channel, by Proposition \ref{P1}, 
$\Phi \left( A,E,Y,\nu \right)$ always is a maximal Gaussian  unsteerable superchannel, regardless of the property of $\chi_1$.

In addition, there exists maximal Gaussian  unsteerable superchannels which are not  Gaussian  unsteerable superchannels.
For example,
take a $(1+1)$-mode  Gaussian superchannel $ \Phi=\Phi \left( A,E,Y,\nu \right) $ with
	\[A = \left( {\begin{array}{*{20}{c}}
			{0.170929}&{ - 0.942009}&{ - 0.609808}&{ - 0.108889}\\
			{1.301268}&{0.599464}&{0.666952}&{ - 0.800351}\\
			{ - 0.151061}&{ - 0.241749}&{0.938864}&{1.130728}\\
			{0.441668}&{1.125889}&{ - 1.767416}&{0.418528}
	\end{array}} \right),\]
\[Y = \left( {\begin{array}{*{20}{c}}
		{5.890063}&{ - 1.845370}&{2.502275}&{ - 1.763982}\\
		{ - 1.845370}&{5.297160}&{ - 2.573896}&{ - 2.759869}\\
		{2.502275}&{ - 2.573896}&{4.270381}&{0.944184}\\
		{ - 1.763982}&{ - 2.759869}&{0.944184}&{3.732230}
\end{array}} \right)\]
and		\[E = \left( {\begin{array}{*{20}{c}}
		1&0&0&0\\
		0&1&0&0\\
		0&0&1&0\\
		0&0&0&1
\end{array}} \right).\]
By calculations, we find that $	A,E,Y $ satisfy
	\[{{w^\dag }Yw + \left| {{w^\dag }A\left( {{0_2} \oplus {\Omega _1}} \right){A^{\rm T}}w} \right| \ge \left| {{w^\dag }\left( {{0_2} \oplus {\Omega _1}} \right)w} \right|}\]
	for	any $w \in \mathbb{C}^4$,
but  do not satisfy
\[{Y - i\left( {{0_2} \oplus {\Omega _1}} \right) + A\left( {{0_2} \oplus i{\Omega_1}} \right){A^{\rm T}} \ge 0 .}\]
That implies that  $\Phi \left( A,E,Y,\nu \right) \in {\mathcal {GSC}}^{(1,1)}_{\mathcal {MUS}(A \rightarrow B)}$, but $\Phi \left( A,E,Y,\nu \right) \notin {\mathcal {GSC}}^{(1,1)}_{\mathcal {US}(A \rightarrow B)}$.
Thus, the set of Gaussian maximal unsteerable superchannels differs from that of all Gaussian unsteerable superchannels. Nevertheless, both can be identified as free superchannels within the resource theory of steering for channels in CV systems.

\section{Conclusion}

As a core component of CV systems, the research on Gaussian channels holds indispensable theoretical value and practical significance. Gaussian channels not only serve as a typical and tractable research vehicle for quantum resource theory, acting as an ideal model to analyze core quantum resources such as coherence, entanglement, and EPR steering, but also deepen the understanding of fundamental physical issues including quantum system symmetry and environmental decoherence, thereby improving the axiomatic framework and mathematical formulation of quantum information theory.
As one of quantum resources,  EPR steering is a unique quantum resource situated between quantum entanglement and Bell nonlocality, which  play  significant roles in various quantum protocols, secure communication and other fields.

In this work, we investigate several classes of Gaussian channels associated with EPR steering: Gaussian steering-annihilating channels (which completely eliminate steering), Gaussian steering-breaking channels (which locally disrupt steering), and maximal Gaussian unsteerable channels. We derive the necessary and sufficient conditions for a Gaussian channel to belong to each of these classes, respectively, and clarify the relationships among them. Notably, there exist channels that are simultaneously steering-annihilating and steering-breaking.
In addition, from the perspective of quantum resource theory, we discuss the structure of Gaussian unsteerable superchannels as free operations, and establish the necessary and sufficient conditions for Gaussian channels to be maximal unsteerable. These results provide deeper insights into quantum channels in CV systems. Future work may focus on constructing a resource theory for quantifying the steering capability of Gaussian channels in CV systems, as well as exploring other quantum resources—such as entanglement—for bosonic Gaussian channels and non-Gaussian channels.\\

\section{Acknowledments }   The authors wish to give their thanks to the referees for many
helpful comments to improve the original paper.

This work is partially supported by National Natural Science Foundation of China (11901421, 12571138, 12171290), Shanxi Scholarship Council of China (2025-001) and the special fund for Science and Technology Innovation Teams of Shanxi	Province (202304051001035).

\section*{Appendix}

\begin{proof}[A proof of Theorem 6]
	Assume that $\phi =\phi\left( {K,M,\mathbf d} \right) \in {\mathcal {GC}}^{(m,n)}_{\mathcal {SA}}$ is any $(m+n)$-mode Gaussian steering-annihilating channel.  Then for any $\rho  \in {\mathcal {GS}}^{(m,n)} $ with CM $\Gamma_{\rho} $,
	$\phi(\rho)$ is unsteerable from A to B or from B to A.
	
	 If $\phi(\rho)\in {\mathcal {GS}}^{(m,n)}_{\mathcal {US}(A \rightarrow B)}$,  by  Eqs.\eqref{N3}-\eqref{E11},  one has 
	\begin{eqnarray*}\label{N62}
		K\Gamma_{\rho}K^{\rm T}+M  \ge  \pm i(0_{2m} \oplus \Omega_n),
	\end{eqnarray*}
	and so
		\begin{eqnarray*}
			\mathbf{w}^{\dag}(K\Gamma_{\rho}K^{\rm T}+M)\mathbf{w}  \ge \pm i\mathbf{w}^{\dag} (0_{2m} \oplus \Omega _n)\mathbf{w},
		\end{eqnarray*}
		that is,
	\begin{eqnarray*}
		\mathbf{w}^{\dag}K\Gamma_{\rho}K^{\rm T}\mathbf{w}+\mathbf{w}^{\dag}M\mathbf{w}  \ge \pm i\mathbf{w}^{\dag} (0_{2m} \oplus \Omega _n)\mathbf{w}
	\end{eqnarray*}
	holds  for all $ \mathbf{w} \in \mathbb{C}^{2(m+n)}$.
	
	Note that, it is shown in \cite{PMGH} that,  for any $N$-mode Gaussian state $\rho$ with  CM $\Gamma$, it holds that
	\begin{eqnarray}\label{inf1}	\inf\limits_{\Gamma\geq\pm i\Omega_N}{\bf w}^\dag\Gamma{\bf w}=|{\bf w}^\dag\Omega_N{\bf w}|,\ \ \forall  \mathbf{w} \in \mathbb{C}^{2N}.\end{eqnarray}
	
	Thus, by Eq.\eqref{inf1}, we have
	$$	\begin{array}{rl}
		&\inf\limits_{\Gamma_\rho\geq\pm i(\Omega_{m} \oplus \Omega _n)}	\mathbf{w}^{\dag}K\Gamma_{\rho}K^{\rm T}\mathbf{w}+\mathbf{w}^{\dag}M\mathbf{w} \\
		=&|\mathbf{w}^{\dag}K(\Omega_{m} \oplus \Omega _n)K^{\rm T}\mathbf{w}|+\mathbf{w}^{\dag}M\mathbf{w} \\
		\ge& \pm i\mathbf{w}^{\dag} (0_{2m} \oplus \Omega _n)\mathbf{w},
	\end{array}$$
	which implies
	$$|\mathbf{w}^{\dag}K(\Omega_{m} \oplus \Omega _n)K^{\rm T}\mathbf{w}|+\mathbf{w}^{\dag}M\mathbf{w}  \ge |\mathbf{w}^{\dag} (0_{2m} \oplus \Omega _n)\mathbf{w}|$$ for all $\mathbf{w} \in \mathbb{C}^{2(m+n)},$
	that is, Ineq.\eqref{N61} holds.
	
	Symmetrically,  one can show that, if $\phi(\rho)\in{\mathcal {GS}}^{(m,n)}_{\mathcal {US}(B \rightarrow A)}$,  then Ineq.\eqref{N611} holds.
	
	On the other hand, if Ineq.\eqref{N61} holds, then for any $ \mathbf{w} \in \mathbb{C}^{2(m+n)}$, by using Eq.\eqref{inf1}, one has
	\begin{eqnarray*}
		\mathbf{w}^{\dag}\Gamma _{\phi ( \rho )}\mathbf{w}
		&=&\mathbf{w}^{\dag}K\Gamma_{\rho} {K^{\rm T}}\mathbf{w} +\mathbf{w}^{\dag} M \mathbf{w}\\
		&\ge& \inf_{\Gamma_{\rho} \ge  \pm i\left( {\Omega_m \oplus {\Omega _n}} \right) } \mathbf{w}^{\dag}K\Gamma_{\rho} {K^{\rm T}}\mathbf{w} +\mathbf{w}^{\dag} M \mathbf{w}\\
		&=& |\mathbf{w}^{\dag}K(\Omega_m \oplus \Omega _n)K^{\rm T}\mathbf{w}|+\mathbf{w}^{\dag}M\mathbf{w}  \\
		&\ge& |\mathbf{w}^{\dag} (0_{2m} \oplus \Omega _n)\mathbf{w}|\\
		&\geq& \pm i\mathbf{w}^{\dag} (0_{2m} \oplus \Omega _n)\mathbf{w} .
	\end{eqnarray*}
	So $\Gamma _{\phi ( \rho )}\ge \pm i(0_{2m} \oplus \Omega _n)$, which means that  $\phi(\rho)$ is unsteerable from A to B. 
Symmetrically, if Ineq.\eqref{N611} holds, then $\phi(\rho)$ is unsteerable from B to A.
	It follows  that $\phi \in {\mathcal {GC}}^{(m,n)}_{\mathcal {SA}}$.
\end{proof}

\begin{proof}[A proof of Theorem 7]	Assume that $\phi=\phi \left( {K,M,\mathbf d} \right) \in {\mathcal {GC}}^{\left(N\right)}$ is any $N$-mode Gaussian channel.
	
	$\left( 1 \right) \Rightarrow \left( 2 \right)$:  By Definition \ref{def41}, this is obvious.
	
	$\left( 2 \right) \Rightarrow \left( 3 \right)$:
Note that  the CM $\Gamma _{{\rho _\phi }}$    of ${\rho _\phi }$ has the form \cite{X11}
	\begin{eqnarray*}\label{Choi}
		{\Gamma _{{\rho _\phi }}} = \left( {\begin{array}{*{20}{c}}
				{\cosh2r K{K^{\rm T}} + M}&{\sinh2r K\Sigma_{N}  }\\
				{\sinh2r\Sigma_{N} {{K^{\rm T}}} }&{\cosh2rI_{2N}}
		\end{array}} \right).
	\end{eqnarray*}		
 If  $ {\rho _\phi } \in {\mathcal {GS}}^{(N,N)}_{\mathcal {US}(A \rightarrow B)}$, then 
	\begin{equation}\label{NN}
		\Gamma _{{\rho _\phi }} + ( 0_{2N} \oplus i{\Omega_{N} }) \ge 0.
	\end{equation}			
	It is well known \cite{ZH} that  a Hermitian matrix  
	\begin{eqnarray}\label{ZH}
	&	W=\left(\begin{array}{cc}W_{11}& W_{12}\\
			W_{12}^{\dag} & W_{22}\end{array}\right)\ge 0 \\ \nonumber
			 &\Leftrightarrow W_{22}\ge 0,\ \ W_{11}-W_{12}W_{22}^{-1}W_{12}^{\dag}\ge0\\\nonumber
		&\Leftrightarrow W_{11}\ge 0,\ \ W_{22}-W_{12}^{\dag}W_{11}^{-1}W_{12}\ge0.
		\end{eqnarray}
		
	So Ineq.\eqref{NN} implies
	$$\begin{cases}\label{66}
		\cosh2rI_{2N} + i{\Omega _{N}}\ge 0,\\	
		\begin{array}{rl}
			&\cosh2r K{K^{\rm T}} + M \\
			&- \sinh^2(2r) K\Sigma_N (\cosh2r I_{2N} + i\Omega _{N} )^{-1}\Sigma_N {K^{\rm T}} \ge 0.\end{array}\end{cases}$$
	Note that  $(T+cI)^{-1}=\frac{1}{c}(I-\frac{T}{c})+o(c^{-2})$ \cite{S1}, where  $o(c^{-2})$ stands for the Landau little-$o$  which will be neglected when taking to  $c \to \infty $. So
	\begin{widetext}
		\begin{eqnarray*}
			&&0\leq \mathop {\lim }\limits_{r \to \infty } [\cosh2r K{K^{\rm T}}  + M - \sinh^2(2r) K\Sigma_N ( \cosh2r I_{2N} + i{\Omega _N} )^{ - 1}\Sigma_N K^{\rm T}]\\
			&=& \mathop {\lim }\limits_{r \to \infty } [\cosh2rK{K^{\rm T}}  + M - \sinh^2(2r) K\Sigma_N[\frac{1}{\cosh2r}( I_{2N} -\frac{i\Omega _N}{\cosh2r})+o(\frac{1}{\cosh2r})]\Sigma_N{K^{\rm T}} ]\\
			&=& \mathop {\lim }\limits_{r \to \infty } [\cosh2r K{K^{\rm T}} + M - K\Sigma_N[\frac{\sinh^2(2r)}{\cosh2r}I_{2N} - i\tanh^2(2r){\Omega _N} +\sinh^2(2r)o(\frac{1}{\cosh2r})]\Sigma_N K^{\rm T} ]= M -i K{\Omega _N}{K^{\rm T}}.
		\end{eqnarray*}	
	\end{widetext}
 If  $ {\rho _\phi } \in {\mathcal {GS}}^{(N,N)}_{\mathcal {US}(B \rightarrow A)}$,
then
 \begin{equation}\label{NN8}
 	\Gamma _{{\rho _\phi }} + (i\Omega_{N}  \oplus 0_{2N} ) \ge 0.
 \end{equation}	
By  the equivalence \eqref{ZH} again, Ineq.\eqref{NN8} implies
$$\begin{array}{rl}
	&\cosh2r K{K^{\rm T}} + M+ i\Omega _{N}\\
	- &\sinh^2(2r) K\Sigma_N (\cosh2r I_{2N})^{-1}\Sigma_N {K^{\rm T}} \ge 0.\end{array}$$
So
	\begin{eqnarray*}
	&&0\leq \mathop {\lim }\limits_{r \to \infty } [\cosh2r K{K^{\rm T}} + M+ i\Omega _{N}\\
	&-&\sinh^2(2r) K\Sigma_N (\cosh2r I_{2N})^{-1}\Sigma_N {K^{\rm T}}]\\
	&=& \mathop {\lim }\limits_{r \to \infty } [\frac{\cosh^2(2r) -\sinh^2(2r)}{\cosh2r}K{K^{\rm T} }+ M + i\Omega _{N}] \\
	&=& M +i{\Omega _N}.
\end{eqnarray*}	
Therefore, the statement  (2) holds.

	$\left( 3 \right) \Rightarrow \left( 1 \right)$:
	Assume that $H'$ is  any  separable complex Hilbert space with the responding $N'$-mode CV system.
	Take any  $\rho  \in  \mathcal {GS}^{\left( N, N'\right)}$ with  CM
	$
	{\Gamma _\rho } = \left( {\begin{array}{*{20}{c}}
			X&Z\\
			{{Z^{\rm T}}}&Y
	\end{array}} \right)$.
	Then
	the CM of $\left( {\phi  \otimes  I_{H'}} \right)(\rho)$ has the form
		\begin{eqnarray}\label{one}
	\begin{array}{rl}
		&{\Gamma _{\left( {\phi  \otimes  I_{H'}} \right)(\rho )}}\\
		=& \left( {\begin{array}{*{20}{c}}
				K&{0}\\
				{0}&I
		\end{array}} \right)\left( {\begin{array}{*{20}{c}}
				X&Z\\
				{{Z^{\rm T}}}&Y
		\end{array}} \right)\left( {\begin{array}{*{20}{c}}
				{{K^{\rm T}}}&{0}\\
				{0}&I
		\end{array}} \right) + \left( {\begin{array}{*{20}{c}}
				M&{0}\\
				{0}&0
		\end{array}} \right)\\
		= &\left( {\begin{array}{*{20}{c}}
				{KX{K^{\rm T}} + M}&{KZ}\\
				{{Z^{\rm T}}{K^{\rm T}}}&Y
		\end{array}} \right).
	\end{array}
		\end{eqnarray}
	Since    $\Gamma _\rho+i({\Omega _N} \oplus {\Omega _{N'}})\ge 0$, a direct calculation gives
	$${Y + i{\Omega _{N'}} \ge 0}$$and
	$$	{X + i{\Omega _N} - Z\left( {Y + i{\Omega _{N'}}} \right)^{-1}{Z^{\rm T}} \ge 0}.$$

	If the matrices  $ K$ and $M $ satisfy  $M-iK \Omega_N K^{\rm T} \ge 0$, one gets
		\begin{eqnarray*}\label{666}
			&&	KX{K^{\rm T}} + M - KZ{\left( {Y + i{\Omega _{N'}}} \right)^{ - 1}}{Z^{\rm T}}{K^{\rm T}}\\
			&=& K\left( {X - Z{{\left( {Y + i{\Omega _{N'}}} \right)}^{ - 1}}{Z^{\rm T}}} \right){K^{\rm T}} + M\\
			&=& K\left({X - Z{{\left( {Y + i{\Omega _{N'}}} \right)}^{ - 1}}{Z^{\rm T}}}
			+ i{\Omega _N} \right){K^{\rm T}} \\
			&&	+ (M - iK{\Omega _N}{K^{\rm T}} )\ge0.
		\end{eqnarray*}
	
	Also note that $KX{K^{\rm T}} + M \ge 0
	$
	by the denifition of the channel $\phi$.
	By \cite{ZH} again,  one achieves 
	$${\Gamma _{\left( {\phi  \otimes  I_{H'}} \right)(\rho )}} + i(0_N \oplus {\Omega _{N'}})\ge 0.$$ So  $ (\phi  \otimes { I_{H'}})\left( \rho  \right)$
	is unsteerable from A to B,  and hence  $\phi$	is steering-breaking.
	
 Now, assume that the matrices  $ K$ and $M $ satisfy the condition $M+i\Omega_N \ge 0$.  
It is shown that in \cite{LHQ} that, if the matrices $ L,V,W \in \mathcal M_n(\mathbb C)$ satisfy the condition  $L> 0$, $W\geq 0$ and   $VLV^{\dag}+W$  invertible, then  $$V^{\dag}(VLV^{\dag}+W)^{-1}V \le L^{-1}.$$ For $X$, $K$ and $M$, applying the above result, one achieves 
 $${K^{\rm T}}(KX{K^{\rm T}} + (M+i{\Omega _{N}}))^{-1}K\le X^{-1}.$$  and so
		\begin{eqnarray*}
		&& Y-{Z^{\rm T}}{K^{\rm T}}(KX{K^{\rm T}} + M+i{\Omega _{N}})^{-1}KZ \\
		&\ge & Y-{Z^{\rm T}}X^{ - 1}{Z} \ge0.
	\end{eqnarray*}

By the equivalence \eqref{ZH},  one obtains
$${\Gamma _{\left( {\phi  \otimes  I_{H'}} \right)(\rho )}} + i({\Omega _{N}} \oplus 0_N')\ge 0.$$ So  $ (\phi  \otimes { I_{H'}})\left( \rho  \right)$
is unsteerable from B to A. Hence  $\phi$	is steering-breaking.
\end{proof}

	\begin{proof}[A proof of Theorem 8]
		Assume that  $\phi =\phi\left( {K,M,\mathbf d} \right) \in {\mathcal {GC}}^{\left(m,n \right)}$ is any $(m+n)$-mode Gaussian channel and $\rho\in {\mathcal {GS}}^{(m,n)}_{\mathcal {US}(A \rightarrow B)}$ is any $(m+n)$-mode Gaussian unsteerable state with CM $\Gamma_{\rho}$.
		
		On the one hand, if  $\phi \in  {\mathcal {GC}}^{(m,n)}_{\mathcal {MUS}(A \rightarrow B)}$, 
		then 
		$\phi(\rho)\in {\mathcal {GS}}^{(m,n)}_{\mathcal {US}(A \rightarrow B)}$. So
		\begin{eqnarray*}
			K\Gamma_{\rho}K^{\rm T}+M  \ge \pm i( 0_{2m} \oplus \Omega _n ),
		\end{eqnarray*}
		which implies  
		\begin{eqnarray*}\label{N53}
			\mathbf{w}^{\dag}K\Gamma_{\rho}K^{\rm T}\mathbf{w}+\mathbf{w}^{\dag}M\mathbf{w}  \ge \pm i \mathbf{w}^{\dag} (0_{2m} \oplus \Omega _n)\mathbf{w}
		\end{eqnarray*}
		for all	$\mathbf{w} \in \mathbb{C}^{2(m+n)}$.
		By a similar discussion to that of Eq.\eqref{inf1},  one can obtain
		$$\begin{array}{rl}
			&\inf\limits_{\Gamma_{\rho} \ge  \pm i( 0_{2m} \oplus \Omega _n)}	\mathbf{w}^{\dag}K\Gamma_{\rho}K^{\rm T}\mathbf{w}+\mathbf{w}^{\dag}M\mathbf{w}  \\
			=& |{\bf w}^\dag K(0_{2m} \oplus \Omega _n)K^{\rm T}{\bf w}|+\mathbf{w}^{\dag}M\mathbf{w}  \\
			\ge &\pm i\mathbf{w}^{\dag} (0_{2m} \oplus \Omega _n)\mathbf{w}.
		\end{array}$$
		Hence
		\begin{eqnarray*}
			|\mathbf{w}^{\dag}K(0_{2m} \oplus \Omega _n)K^{\rm T}\mathbf{w}|+\mathbf{w}^{\dag}M\mathbf{w} \ge |\mathbf{w}^{\dag} (0_{2m} \oplus \Omega _n)\mathbf{w}|
		\end{eqnarray*}
		holds for all 	 $\mathbf{w} \in \mathbb{C}^{2(m+n)}$.
		
		On the other hand,  if \eqref{N51} holds, as $\Gamma_{\rho} \ge  \pm i\left( { 0_{2m} \oplus {\Omega _n}} \right) $,    for any $ \mathbf{w} \in \mathbb{C}^{2(m+n)}$, 
		we have
		\begin{eqnarray*}
			\mathbf{w}^{\dag}\Gamma _{\phi ( \rho )}\mathbf{w}
			&=&\mathbf{w}^{\dag}K\Gamma_{\rho} {K^{\rm T}}\mathbf{w} +\mathbf{w}^{\dag} M \mathbf{w}\\
			&\ge& \inf_{\Gamma_{\rho} \ge \pm i( 0_{2m} \oplus \Omega _n )} \mathbf{w}^{\dag}K\Gamma_{\rho} {K^{\rm T}}\mathbf{w} +\mathbf{w}^{\dag} M \mathbf{w}\\
			&=& |\mathbf{w}^{\dag}K(0_{2m} \oplus \Omega _n)K^{\rm T}\mathbf{w}|+\mathbf{w}^{\dag}M\mathbf{w}  \\
			&\ge& |\mathbf{w}^{\dag} (0_{2m} \oplus \Omega _n)\mathbf{w}|\\
			&\ge& \pm i\mathbf{w}^{\dag } (0_{2m} \oplus \Omega _n)\mathbf{w} .
		\end{eqnarray*}
		This means  $\Gamma _{\phi ( \rho )}\ge \pm i(0_{2m} \oplus \Omega _n)$, and so $\phi(\rho)$ is unsteerable from A to B.
		It  follows that  $\phi \in {\mathcal {GC}}^{(m,n)}_{\mathcal {MUS}(A \rightarrow B)}$.
	\end{proof}
	
    \begin{proof}[A proof of Theorem 9]

Suppose  that   $\Phi \left( A,E,Y,\nu \right) \in  {\mathcal {SGC}}^{(m,n)} $ is  any  $(m+n)$-mode Gaussian superchannel.
By \cite{X11},   there exist some  $\chi_1=\chi_1( K_1, M_1,\mathbf d_1 )$, $\chi_2=\chi_2(K_2, M_2, \mathbf d_2 ) \in {\mathcal {GC}}^{(m,n)}$   such that
$\Phi \left( \phi  \right) = {\chi_2} \circ \phi  \circ \chi_1=\phi'(K',M',\mathbf d)$ for all  $ \phi \left( K,M,\mathbf d \right) \in \mathcal {GC}^{(m,n)}$.  Note that  $K'$ and $M'$ satisfy  
$$M' ={K_2}K{M_1}{K^{\rm T}}{K_2}^{\rm T} + {K_2}M{K_2}^{\rm T} + {M_2}$$ and 
$$K'=K_2{K}K_1. $$

Also
notice that  the CM $\Gamma_\tau$ of any $(m+n)$-mode unsteerable (from A to B) Gaussian state $\tau$  can be written as $$\Gamma_{\tau} =   0_{2m} \oplus Q_{\tau}  +P_{\tau}$$
with some real matrix $Q_{\tau}\ge i{\Omega _n}$ and $ P_{\tau}\geq 0$ \cite{MKIB}.

Now, if both $\chi_1$ and $\chi_2$ are maximal unsteerable from A to B,  then for any $(m+n)$-mode maximal Gaussian unsteerable channel   $ \phi=\phi \left( K,M,\mathbf d \right) \in {\mathcal {GC}}^{(m,n)}_{\mathcal {MUS}(A \rightarrow B)} $ and any Gaussian unsteerable state $\rho=\rho(\Gamma_\rho,\mathbf d_\rho)\in {\mathcal {GS}}^{(m,n)}_{\mathcal {US}(A \rightarrow B)}$,
we have 
\begin{eqnarray*}
	&&\Gamma_{\Phi \left( \phi  \right)(\rho)}=\Gamma _{{\chi_2} \circ \phi  \circ {\chi_1}( \rho )}\\
	&=&{K}_2K{K_1}\Gamma_\rho{K_1}^{\rm T}{K^{\rm T}}{K_2}^{\rm T}\\
	&&+{K_2}K{M_1}{K^{\rm T}}{K_2}^{\rm T} + {K_2}M{K_2}^{\rm T} + {M_2}\\
	&=& {K}_2[K( 0_{2m} \oplus Q_{\rho}^{(1)} + P_{\rho}^{(1)} )K^{\rm T}+M]{K}_2^{\rm T}+M_2\\
	&=& {K}_2( 0_{2m} \oplus Q_{\rho}^{(2)} +P_{\rho}^{(2)}){K}_2^{\rm T}+M_2\\
	&\ge& \pm i(0_{2m} \oplus {\Omega _n}),
\end{eqnarray*}
where
$$0_{2m} \oplus Q_{\rho}^{(1)} + P_{\rho}^{(1)}=K_1\Gamma_{\rho} K_1^{\rm T}+M_1,$$
$$ 0_{2m} \oplus Q_{\rho}^{(2)} +P_{\rho}^{(2)}=K\left(  0_{2m} \oplus Q_{\rho}^{(1)} + P_{\rho}^{(1)} \right)K^{\rm T}+M ,$$
and	the second equation, the third equation,  the last inequality due to respectively the maximal unsteerability of $\chi_1$, $\phi$  and $\chi_2$.  So $\Phi(\phi)(\rho)$ is unsteerable from A to B. 
It follows  that  $\Phi \left( \phi  \right) \in {\mathcal {GC}}^{(m,n)}_{\mathcal {MUS}(A \rightarrow B)} $.

If both $\chi_1$ and  $\chi_2$ are Gaussian unsteerable from A to B, that is, they satisfy Ineq.\eqref{N4}:
\begin{eqnarray*}\label{N7}
	M_2 + \left( 0_{2m} \oplus i\Omega _n \right) - K_2\left(0_{2m} \oplus i\Omega _n \right){K_2^{\rm T}} \ge 0
\end{eqnarray*}	and
\begin{eqnarray*}\label{N72}
	M_1 + \left( 0_{2m} \oplus i\Omega _n \right) - K_1\left(0_{2m} \oplus i\Omega _n \right){K_1^{\rm T}} \ge 0,
\end{eqnarray*}	
then for any  $(m+n)$-mode Gaussian unsteerable channel $\phi( {K, M,\mathbf d})\in {\mathcal {GC}}^{(m,n)}_{\mathcal {US}(A \rightarrow B)} $, we have 
\begin{eqnarray*}\label{N72}
	M+ \left( 0_{2m} \oplus i\Omega _n \right) - K\left(0_{2m} \oplus i\Omega _n \right){K^{\rm T}} \ge 0,
\end{eqnarray*}	and so 

\begin{eqnarray*}\label{N9}
	& &M' + ({0_{2m}} \oplus i{\Omega _n}) -K'\left( {0_{2m}} \oplus i{\Omega _n} \right){K'^{\rm T}} \\
	&=& {K_2}K{M_1}{K^{\rm T}}{K_2}^{\rm T} + {K_2}M{K_2}^{\rm T} + {M_2}\\
	&& +({0_{2m}} \oplus i{\Omega _n})-K_2{K}K_1 (0_{2m} \oplus i{\Omega _n})K_1^{\rm T}K^{\rm T} {{ K_2}^{\rm T}}	\\
	&=&  {K_2}K[M_1+({0_{2m}} \oplus i{\Omega _n})-K_1 (0_{2m} \oplus i{\Omega _n})K_1^{\rm T}]K^{\rm T}{K_2}^{\rm T}\\
	&& + {K_2}[M+({0_{2m}} \oplus i{\Omega _n})-K (0_{2m} \oplus i{\Omega _n})K^{\rm T}]{K_2}^{\rm T} \\
	&&+{M_2}+({0_{2m}} \oplus i{\Omega _n})-K_2 (0_{2m} \oplus i{\Omega _n}) {{ K_2}^{\rm T}}\\
	&\ge&  0.
\end{eqnarray*}
Hence  $ \Phi \left( \phi  \right)\in {\mathcal {GC}}^{(m,n)}_{\mathcal {US}(A \rightarrow B)} $.

On the other hand, suppose that the matrices $A,E$ and $Y$ satisfy
\begin{eqnarray*}\label{N7}
	Y + \left( 0_{2m} \oplus i\Omega _n \right) - A\left(0_{2m} \oplus i\Omega _n \right){A^{\rm T}} \ge 0
\end{eqnarray*}	and
\begin{eqnarray*}\label{N72}
	\left( 0_{2m} \oplus i\Omega_n \right) -
	E \left( 0_{2m} \oplus i\Omega _n\right)E^{\rm T} \ge 0. 
\end{eqnarray*}	
As $\Sigma_{m+n} \left( 0_{2m} \oplus i\Omega _n\right)
\Sigma_{m+n}=-\left( 0_{2m} \oplus i\Omega _n\right)$, the inequality implies
$$\begin{array}{rl}
	&	\left( 0_{2m} \oplus i\Omega_n \right) \\
	&- \Sigma_{m+n}
	E^{\rm T}\Sigma_{m+n} \left( 0_{2m} \oplus i\Omega _n\right)
	\Sigma_{m+n} E \Sigma_{m+n}  \ge 0.
\end{array}	$$
Thus, for any $\phi(K,M,\mathbf d)\in {\mathcal {GC}}^{(m,n)}_{\mathcal {US}(A \rightarrow B)} $ with $\Phi(\phi)=\phi'(K',M',\mathbf d')$, 
$K'$ and $M'$ satisfy
\begin{eqnarray*}\label{N9}
	& &M' + ({0_{2m}} \oplus i{\Omega _n}) -K'\left( {0_{2m}} \oplus i{\Omega _n} \right){K'^{\rm T}} \\
	&=& {A}MA^{\rm T}+Y+(0_{2m} \oplus i\Omega _n)\\
	&&	-{A}{K}\Sigma_{m+n} E^{\rm T}\Sigma_{m+n} (0_{2m} \oplus i{\Omega _n})\Sigma_{m+n} E\Sigma_{m+n} {{K^{\rm T}}} {{ A}^{\rm T}}	\\
	&\ge& AMA^{\rm T}+Y- AK(0_{2m} \oplus i\Omega_n) K^{\rm T} A^{\rm T}+(0_{2m} \oplus i\Omega _n) \\
	&=& A[M+(0_{2m} \oplus i\Omega _n)-K(0_{2m} \oplus i\Omega_n) K^{\rm T}] A^{\rm T}\\
	&&	+ Y+(0_{2m} \oplus i\Omega _n)-A(0_{2m} \oplus i\Omega _n)A^{\rm T}\\
	&\ge& Y+(0_{2m} \oplus i\Omega _n)-A(0_{2m} \oplus i\Omega _n)A^{\rm T} \ge 0.
\end{eqnarray*}
Hence   $ \Phi \left( \phi  \right)\in {\mathcal {GC}}^{(m,n)}_{\mathcal {US}(A \rightarrow B)} $ in this case.
\end{proof}

\end{document}